\documentclass[11pt,a4paper]{article}

\usepackage{preamble-fv-simple}
\usepackage{macros}

\author[1]{Lars Jaffke} %%% add authors
\author[1]{Paloma T.\ Lima} %%% add authors
\author[2]{Daniel Lokshtanov} %%% add authors

\affil[1]{University of Bergen, Norway} %%% add affiliations
\affil[ ]{\texttt{\{lars.jaffke,paloma.lima\}@uib.no}} %%% add email addresses

\affil[2]{UC Santa Barbara, California, USA} %%% add affiliations
\affil[ ]{\texttt{daniello@ucsb.edu}} %%% add email addresses

\title{$b$-Coloring Parameterized by Clique-Width\thanks{%
	Based on an extended abstract that appeared STACS 2021~\cite{stacs2021}.
}} %%% add title

\begin{document}

\maketitle

\begin{abstract}	
	We provide a polynomial-time algorithm for \textsc{$b$-Coloring} on graphs of constant clique-width.
	This unifies and extends nearly all previously known polynomial time results on graph classes,
	and answers open questions posed by Campos and Silva [Algorithmica, 2018]
	and Bonomo et al.~[Graphs Combin., 2009].
	This constitutes the first result concerning structural parameterizations of this problem.
	We show that the problem is \FPT when parameterized by the vertex cover number on general graphs,
	and on chordal graphs when parameterized by the number of colors.
	Additionally, we observe that our algorithm for graphs of bounded clique-width
	can be adapted to solve the \textsc{Fall Coloring} problem within the same runtime bound.
	The running times of the clique-width based algorithms for \bcol and \fallcol 
	are tight under the Exponential Time Hypothesis.
\end{abstract}

%%%%%%%% INTRO %%%%%%%%

\section{Introduction}
This paper settles open questions regarding the complexity 
of the \textsc{$b$-Coloring} problem on graph classes
and initiates the study of its structural parameterizations.
A \emph{$b$-coloring} of a graph $G$ with $k$ colors is a partition of the vertices of $G$ 
into $k$ independent sets such that
each of them contains a vertex that has a neighbor in all of the remaining ones.
The \emph{$b$-chromatic number} of $G$, denoted by $\chi_b(G)$, is the maximum integer $k$
such that $G$ admits a $b$-coloring with $k$ colors.
This notion was introduced by Irving and Manlove~\cite{IrvingManlove1999} to describe the
behavior of the following color-suppressing heuristic for the \textsc{Graph Coloring} problem.
We start with some proper coloring of the input graph $G$ 
and try to iteratively suppress one of its colors.
That is, for a given color $c$, we consider each vertex $v$ of color $c$, and check if there is another color $c' \neq c$
available that does not appear in its neighborhood.
If so, we assign vertex $v$ the color $c'$, observing that the coloring remains proper, 
and repeat this process for the remaining vertices of color $c$.
If successful, we remove the color $c$ from all vertices of $G$ and decrease the number of colors by one.
Once no color can be supressed by this procedure, the coloring at hand is a $b$-coloring of $G$, and in the worst case,
this heuristic produces a coloring with $\chi_b(G)$ many colors.

Since then, the \bcol and \bchrom problems which given a graph $G$ and an integer $k$ ask whether $G$ has a $b$-coloring with $k$ colors
and whether $\chi_b(G) \ge k$, respectively, have received considerable attention in the algorithms and complexity communities.
Before we discuss these results, 
note that the \bcol and \bchrom problem are not as closely related as the \graphcol and \textsc{Chromatic Number} problems 
in terms of their (polynomial time) complexities.
If we can solve \textsc{Chromatic Number}, then we can use this algorithm to solve \graphcol,
since each $n$-vertex graph $G$ has proper colorings with $\chi(G), \ldots, n$ colors.
However, knowing $\chi_b(G)$ and $\chi(G)$ does not say anything about the existence of a $b$-coloring with $k \in \{\chi(G) + 1, \ldots, \chi_b(G) - 1\}$ colors.
Therefore, the \bcol problem can be computationally harder on a graph class than the \bchrom problem.
Trivially, if we know how to solve \bcol in polynomial time, we can solve \bchrom in polynomial time.

The \bchrom problem has been shown to be \NP-complete in the general case~\cite{IrvingManlove1999},
as well as on bipartite graphs~\cite{KratochvilTuzaVoigt2002},
co-bipartite graphs~\cite{BonomoEtAl2015},
chordal graphs~\cite{HavetSalesSampaio2012},
and line graphs~\cite{CamposEtAl2015}, and
a lot of effort has been put into devising polynomial time algorithms
for \bcol in various other classes of graphs.%
\footnote{In many of the following references, the results are stated for \bchrom instead of \bcol;
however the algorithms for \bcol follow from the algorithms for \bchrom together with 
the fact that these graph classes are \emph{$b$-continuous}%
~\cite{BonomoEtAl2015,Faik2004,VBK2011},
meaning that they have $b$-colorings any number $k \in \{\chi(G), \ldots, \chi_b(G)\}$ of colors,
and the fact that \textsc{Chromatic Number} is solvable in polynomial time on these graph classes 
(for instance via~\cite{EspelageEtAl2001,Wan94}).
}
These include trees~\cite{IrvingManlove1999},
tree-cographs~\cite{BonomoEtAl2015},
and graphs with few $P_4$s, such as cographs and $P_4$-sparse graphs~\cite{BonomoEtAl2009},
$P_4$-tidy graphs~\cite{VBK2011}, and $(q, q-4)$-graphs for constant~$q$~\cite{CamposEtAl2014}.
A common property shared by these graph classes is that they all have bounded 
\emph{clique-width}~\cite{GolumbicRotics2000,Gurski2017,MakowskyRotics1999,Vanherpe2004}.\footnote{To the best of our knowledge,
the only polynomial time result for graphs of unbounded clique-width so far concerns graphs of large girth.
In particular, Campos et al.~\cite{CamposEtAl2015b} showed that 
\bchrom is polyomial-time solvable on graphs of girth at least $7$.}

The main contribution of this work is an algorithm that solves \textsc{$b$-Coloring} (and \textsc{$b$-Chromatic Number}) 
in polynomial time on graphs of constant clique-width.
Besides unifying the above mentioned polynomial time cases, 
this extends the tractability landscape of these problems to larger graph classes,
and answers two open problems stated in the literature.

Over a decade ago, Bonomo et al.~\cite{BonomoEtAl2009} asked whether their polynomial time result for cographs
can be extended to distance-hereditary graphs.
Havet et al.~\cite{HavetSalesSampaio2012} answered the question negatively by providing an 
\NP-completeness proof for chordal distance-hereditary graphs.
We observe, however, that their proof has a flaw 
and while it does prove the claimed statement for chordal graphs, 
it unfortunately fails to do so for distance-hereditary graphs.
Our polynomial time algorithm for graphs of bounded clique-width in fact provides a positive answer to Bonomo et al.'s question,
as distance-hereditary graphs have clique-width at most three~\cite{GolumbicRotics2000}.
In recent years, even subclasses of distance-hereditary graphs have received significant attention,
for instance in the work of Campos and Silva~\cite{CamposSilva2018}:
they provide a polynomial time
algorithm for claw-free block graphs, and ask whether this result can be generalized to block graphs.
Our algorithm provides a positive answer to this question as well.
Moreover, it extends the known algorithm for $(q, q-4)$-graphs~\cite{CamposEtAl2014} (for constant~$q$) 
to all $(q, t)$-graphs for constants $q$ and $t$ with $q \ge 4$, $t \ge 0$, 
and either $q \le 6$ and $t \le q - 4$, or $q \ge 7$ and $t \le q - 3$,
by a theorem due to Makowsky and Rotics~\cite{MakowskyRotics1999}.
Similarly, it extends the polynomial time algorithm for $P_4$-tidy graphs~\cite{VBK2011}
to the class of partner-limited graphs thanks to a result by Vanherpe~\cite{Vanherpe2004}.
We give an overview of the graph classes involved in the previous discussion in Figure~\ref{fig:graphclasses}.
\begin{figure}
	\centering
	\includegraphics[width=.75\textwidth]{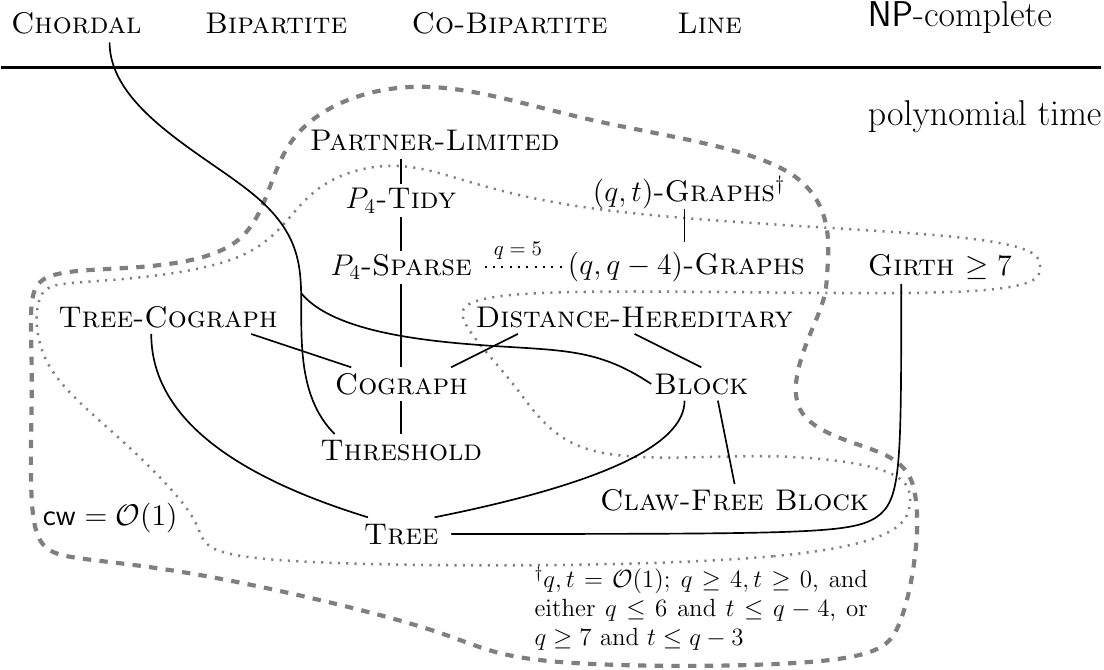}
	\caption{Some graph classes on which the complexities of \bcol and \bchrom problem were studied. 
		Whenever two classes are connected by a line, the upper one contains the lower one. 
		All \NP-hardness results hold for \bchrom 
		and all polynomial time results, except the one for graphs of girth at least seven,
		hold for \bcol.
%		The top area contains classes where \bcol is \NP-complete,
%		and the bottom one contains classes where \bcol is polynomial time solvable. 
		The inner bottom area (dotted line) shows classes for which polynomial time algorithms were previously known 
		and the outer area (dashed line, labeled $\cw = \calO(1)$) shows on which classes our algorithm can be applied.}
	\label{fig:graphclasses}
\end{figure}

Our algorithm runs in time $n^{2^{\calO(\givenmw)}}$, 
where $n$ denotes the number of vertices of the input graph 
which is given together with a clique-width $\givenmw$-expression.
As consequences of results due to Fomin et al.~\cite{FominEtAl2010} and Fomin et al.~\cite{Fomin2018},
we observe that \bcol parameterized by clique-width is \W[1]-hard, and that
the exponential dependence on $\givenmw$
in the degree of the polynomial cannot be avoided unless the Exponential Time Hypothesis (\ETH) fails.
Concretely, an algorithm running in time $n^{2^{o(\givenmw)}}$ would refute \ETH.

From the point of view of parameterized complexity,
Panolan et al.~\cite{PanolanPhilipSaurabh2017} showed that \bchrom
parameterized by the number of colors is \W[1]-hard.
However, this problem may even be harder,
since so far no \XP-algorithm is known.
Recently, Aboulker et al.~\cite{AboulkerEtAl2020} showed that the more restrictive 
\textsc{$b$-Chromatic Core} problem parameterized by the number of colors
(which has a brute-force \XP-algorithm, see e.g.~\cite{EffantinEtAl2016}) remains \W[1]-hard.

It is therefore natural to ask which additional restrictions can be imposed to obtain 
parameterized tractability results.
For instance, an open problem posed by Sampaio~\cite{SampaioThesis} (see also~\cite{SilvaThesis}) 
asks whether \textsc{$b$-Coloring} parameterized by the number of colors is \FPT on chordal graphs.
We answer this question in the affirmative.
% giving a $2^{\calO(k^2)}n$ time algorithm.
%
Other restricted cases that have been considered in the literature  
target specific numbers of colors that depend on the input graph.
The \textsc{Dual $b$-Coloring} problem, which asks if an input $n$-vertex graph has a $b$-coloring with $n-k$ colors,
is \FPT parameterized by $k$~\cite{HavetSampaio2013}.
Moreover, deciding if a graph $G$ has a $b$-coloring with $k = \Delta(G) + 1$ colors, which is an upper bound on $\chi_b(G)$, 
is \FPT parameterized by $k$~\cite{PanolanPhilipSaurabh2017,SampaioThesis},
while the case $k = \Delta(G)$ is \XP and for every fixed $p \ge 1$,
the case $k = \Delta(G) - p$ is \NP-complete for $k = 3$~\cite{JaffkeLima2020}.

Another novelty aspect of our \XP-algorithm parameterized by clique-width 
is that it is the first result about 
\emph{structural parameterizations} of the \bcol and \bchrom problems.
In all previously known polynomial time cases the algorithms only work if the input graph has 
some prescribed structure. Our algorithm works on all graphs,
albeit with a prohibitively slow runtime on graphs of large clique-width.
In this vein, we round off our work with an \FPT-result for another lead player among structural parameterizations,
the \emph{vertex cover number} of a graph; a parameter often referred to as the \emph{Drosophila}
of parameterized complexity.

\paragraph*{Fall Coloring.}~ 
A \emph{fall coloring} is a special type of $b$-coloring where 
\emph{every} vertex needs to have at least one neighbor in all color classes except its own.
In other words, it is a partition of the vertex set of a graph into independent dominating sets.
As a standalone notion, fall coloring has been introduced by Dunbar et al.~\cite{DunbarEtAl2000}.
However, since the corresponding \fallcol problem
falls in the category of locally checkable vertex partitioning problems,
it has been shown in earlier work of Telle and Proskurowski~\cite{TelleProskurowski1997} 
to be \FPT parameterized by the tree-width of the input graph,
as well as \FPT parameterized by clique-width plus the number of colors by Gerber and Kobler~\cite{GerberKobler2003} 
(see also~\cite{Bui-XuanEtAl2013}),
and by Heggernes and Telle~\cite{HeggernesTelle1998} 
to be \NP-complete for fixed number of colors. 
\fallcol remains hard further restricted to
bipartite~\cite{LaskarLyle2009,LauriMitillos2019,Silva2019},
chordal~\cite{Silva2019}, or
planar~\cite{LauriMitillos2019} graphs.
On the other hand, even with unbounded number of colors, it
is known to be solvable in polynomial time on 
strongly chordal graphs~\cite{LyleEtAl2005,GoddardHenning2013}, 
threshold graphs and split graphs~\cite{Mitillos2016}.
In all of these cases, one simply checks whether the chromatic number of the input graph
is equal to its minimum degree plus one.
To the best of our knowledge, these are the only known polynomial time cases.

We adapt our algorithm for \bcol on graphs of bounded clique-width to solve 
\fallcol, and therefore show that the latter problem is as well solvable in time 
$n^{2^{\calO(\givenmw)}}$, where $\givenmw$ denotes the clique-width of a given 
decomposition of the input graph.
By a simple reduction, we show that \fallcol is also \W[1]-hard in this parameterization 
and that an $n^{2^{o(\givenmw)}}$-time algorithm for it would refute \ETH.

\paragraph*{Vertex Coloring Problems Parameterized by Clique-Width.}~
We briefly touch on differences in the complexities of vertex coloring problems of graphs
when parameterized by clique-width.
While the standard \gcol problem, asking for a proper coloring of the input graph,
is \XP-time solvable parameterized by clique-width~\cite{EspelageEtAl2001,Wan94},
some of its generalizations are \NP-complete on graphs of constant clique-width.
In the \textsc{List Coloring} problem we are given a graph $G$ and for each of its vertices $v$
a list $L(v)$ of colors, and the question is whether $G$ has a proper coloring such that each vertex
is assigned a color from its list.
This problem is \NP-complete on the (not disjoint) union of two complete graphs~\cite{Jansen1997}.
We can see that such graphs have bounded clique-width for instance by observing that they do not contain a path on four vertices as an induced subgraph, and are therefore cographs, which have clique-width at most two~\cite{CourcelleOlariu2000}.
% and such graphs clearly have constant clique-width.
%
In the related \textsc{Precoloring Extension} problem, we are given a graph, 
some of whose vertices already received a color, and the question is 
whether this coloring can be extended to a proper coloring of the entire graph.
The following standard reduction from \textsc{List Coloring}, 
starting with a graph that is the union of two complete graphs, 
shows that this variant is \NP-complete on graphs of constant clique-width as well.
Take the graph $G$ together with the lists $L(\cdot)$, and construct a graph $H$ by adding to $G$, for each vertex $v \in V(G)$
and each color $c \notin L(v)$, a new vertex $x_v^c$ which is adjacent only to $v$ and assigned color $c$.
It is not difficult to see that this precoloring of $H$ can be extended to the remainder of its vertices if and only if 
$G$ has a list coloring using the lists $L(\cdot)$.
Moreover, adding pendant vertices to a graph does not increase its clique-width.

Belmonte et al.~\cite{BelmonteEtAl2020} showed that the \textsc{Grundy Coloring} problem,
which asks for a linear order of the vertices that maximizes the number of colors used by the greedy coloring heuristic,
is \NP-complete on graphs of constant clique-width.
This nicely contrasts our \XP-algorithm for \bcol,
since both the \bcol and the \textsc{Grundy Coloring} problems are rooted in the theoretical analysis
of graph coloring heuristics.

Very recently, Jaffke et al.~\cite{JaffkeLimaPhilip2020} showed that the \textsc{Clique Coloring} problem,
asking for a vertex coloring without monochromatic maximal cliques, 
is \XP parameterized by clique-width as well. 
The question whether \textsc{Clique Coloring} parameterized by clique-width is \W[1]-hard remains open.

\paragraph*{Sketch of the algorithm.}~
Let us discuss how we obtain our \XP-algorithm parameterized by clique-width.
First, we consider a branch decomposition of the input graph $G$ of bounded \emph{module-width} $\givenmw$
which is equivalent to clique-width and has the following property.
At each node $t$ of the branch decomposition we have a subgraph $G_t$ of $G$
whose vertex set can be partitioned into at most $\givenmw$ equivalence classes 
with respect to their neighborhood outside of $G_t$.
For the purpose of our dynamic programming algorithm, it suffices to describe colorings
by the way each of their color classes interacts with these equivalence classes.
In the \gcol problem, it is enough to describe a color class according to its intersection
with the equivalence classes of $G_t$ alone~\cite{EspelageEtAl2001,Wan94} (see also~\cite{Fomin2018}).
For the \bcol problem, we additionally have to ensure that eventually, each color class
indeed has a $b$-vertex. 
The challenge is to do so without explicitly remembering which color classes a vertex 
has already seen in its neighborhood -- this would result in prohibitively large tables.
We overcome this difficulty by a symmetry breaking trick that instead stores, 
for each color class,
a \emph{demand} to the future neighbors of the equivalence classes which 
-- if fulfilled -- guarantees that the \emph{other} color classes 
can have $b$-vertices in the end.

%%%%%%%%% PRELIMINARIES %%%%%%%%

\section{Preliminaries}
\paragraph*{Graphs.}~
All graphs considered here are simple and finite. 
For a graph $G$ we denote by $V(G)$ and $E(G)$ the vertex set and edge set of $G$, respectively.
For an edge $e = uv \in E(G)$, we call $u$ and $v$ the \emph{endpoints} of $e$ and we write $u \in e$ and $v \in e$. 

For two graphs $G$ and $H$, we say that $G$ is a \emph{subgraph} of $H$, written $G \subseteq H$,
if $V(G) \subseteq V(H)$ and $E(G) \subseteq E(H)$.
For a set of vertices $S \subseteq V(G)$, the \emph{subgraph of $G$ induced by $S$}
is $G[S] \defeq (S, \{uv \in E(G) \mid u, v \in S\})$.

For a graph $G$ and a vertex $v \in V(G)$, the set of its \emph{neighbors} is 
$N_G(v) \defeq \{u \in V(G) \mid uv \in E(G)\}$, and the \emph{degree} of $v$ is $\deg_G(v) \defeq \card{N_G(v)}$.
The \emph{closed neighborhood} of $v$ is $N_G[v] \defeq \{v\} \cup N_G(v)$.
For a set $X \subseteq V(G)$, we let $N_G(X) \defeq \bigcup_{v \in X} N_G(v) \setminus X$
and $N_G[X] \defeq X \cup N_G(X)$.
In all these cases, we may drop $G$ as a subscript if it is clear from the context.
A graph is called \emph{subcubic} if all its vertices have degree at most three.

A graph $G$ is \emph{connected} if for all $2$-partitions $(X, Y)$ of $V(G)$ 
with $X \neq \emptyset$ and $Y \neq \emptyset$,
there is a pair $x \in X$, $y \in Y$ such that $xy \in E(G)$.
A \emph{connected component} of a graph is a maximal connected subgraph.
A connected graph is called a \emph{cycle} if all its vertices have degree two.
A connected graph is called a \emph{tree} if it has no cycle as a subgraph.
In a tree $T$, the vertices of degree one are called the \emph{leaves} of $T$,
denoted by $\leaves(T)$, and the vertices in $V(T) \setminus \leaves(T)$ are 
the \emph{internal vertices} of $T$.
A tree of maximum degree at most two is a \emph{path} and the leaves of 
a path are called its \emph{endpoints}.
If $P$ is a path with endpoints $u$ and $v$, then we say that $P$ is a 
\emph{path from $u$ to $v$}.
The \emph{length} of a path is the number of its edges.
For a graph $G$ and a pair of vertices $u, v \in V(G)$,
we denote by $\dist_G(u, v)$ the length of the shortest path between $u$ and $v$ in $G$.

A tree $T$ is called \emph{rooted}, if there is a distinguished vertex $r \in V(T)$,
called the \emph{root} of $T$, inducing an ancestral relation on $V(T)$:
for a vertex $v \in V(T)$, if $v \neq r$, the neighbor of $v$ on the path 
from $v$ to $r$ is called the \emph{parent} of $v$, and all other neighbors of $v$ 
are called its \emph{children}.
For a vertex $v \in V(T) \setminus \{r\}$ with parent $p$, 
the \emph{subtree rooted at $v$}, denoted by $T_v$, 
is the subgraph of $T$ induced by all vertices 
that are in the same connected component of 
$(V(T), E(T) \setminus \{vp\})$ as $v$. 
We define $T_r \defeq T$.
A tree $T$ is called a \emph{caterpillar} if it contains a path $P \subseteq T$ such that
all vertices in $V(T) \setminus V(P)$ are adjacent to a vertex in $P$.

For a graph $H$, we say that a graph $G$ is \emph{$H$-free} 
if $G$ does not contain $H$ as an induced subgraph.
For a set of graphs $\calH$, we say that $G$ is \emph{$\calH$-free}
if $G$ is $H$-free for all $H \in \calH$.
For an integer $k \ge 3$, let $C_k$ denote a cycle on $k$ vertices.
A graph $G$ is called \emph{chordal} if it is $\{C_n \mid n \ge 4\}$-free.
A graph $G$ is called \emph{distance-hereditary} 
if for each connected induced subgraph $H$ of $G$, and each pair of vertices $u, v \in V(H)$,
$\dist_H(u, v) = \dist_G(u, v)$.

A set of vertices $S \subseteq V(G)$ of a graph $G$ is called an \emph{independent set}
if $E(G[S]) = \emptyset$.
A set of vertices $S \subseteq V(G)$ is a \emph{vertex cover} in $G$ if $V(G) \setminus S$ 
is an independent set in $G$.
A set of vertices $S \subseteq V(G)$ is a \emph{clique} in $G$ if $E(G[S]) = \{uv \mid u, v \in S\}$.

A graph $G$ is called \emph{bipartite} if its vertex set can be partitioned into two
nonempty independent sets, which we will refer to as a \emph{bipartition} of $G$.

\paragraph*{Notation for Equivalence Relations.}~
Let $\Omega$ be a set and $\sim$ an equivalence relation over $\Omega$.
For an element $x \in \Omega$ the \emph{equivalence class of $x$}, denoted by $[x]_\sim$
or simply $[x]$ if $\sim$ is clear from the context,
is the set $\{y \in \Omega \mid x \sim y\}$.
We denote the set of all equivalence classes of $\sim$ by $\Omega/{\sim}$.

\paragraph*{Parameterized Complexity.}~ 
We give the basic definitions of parameterized complexity that are relevant to this work 
and refer to~\cite{CyganEtAl2015,DowneyFellows2013} for details.
Let $\Sigma$ be an alphabet. A \emph{parameterized problem} is a set $\Pi \subseteq \Sigma^* \times \bN$,
the second component being the \emph{parameter} which usually expresses a structural measure of the input.
A parameterized problem $\Pi$ is said to be \emph{fixed-parameter tractable}, or in the complexity class \FPT,
if there is an algorithm that for any $(x, k) \in \Sigma^* \times \bN$ 
correctly decides whether or not $(x, k) \in \Pi$, and runs in time $f(k) \cdot \card{x}^c$
for some computable function $f \colon \bN \to \bN$ and constant $c$.
We say that a parameterized problem is in the complexity class \XP, 
if there is an algorithm that for each $(x, k) \in \Sigma^* \times \bN$ correctly decides whether or not $(x, k) \in \Pi$,
and runs in time $f(k) \cdot \card{x}^{g(k)}$, for some computable functions $f$ and $g$.

The concept analogous to \NP-hardness in parameterized complexity is that of \W[1]-hardness, 
whose formal definition we omit.
The basic assumption is that $\FPT \neq \W[1]$, under which no \W[1]-hard problem admits an \FPT-algorithm.
For more details, see~\cite{CyganEtAl2015,DowneyFellows2013}.

\paragraph*{Exponential Time Hypothesis.}~
The \textsc{$3$-SAT} problem asks whether 
a given boolean formula in conjunctive normal form 
with clauses of size at most three
has a truth assignment to its variables that lets the formula evaluate to true.
In 2001, Impagliazzo, Paturi, and Zane~\cite{ImpagliazzoPaturi2001,ImpagliazzoPaturiZane2001}
conjectured that any algorithm for the \textsc{$3$-SAT} problem
requires exponential time.
This conjecture is known as the \emph{Exponential Time Hypothesis} (\ETH)
whose plausibility stems from the fact that despite numerous efforts,
a subexponential-time algorithm for \textsc{$3$-SAT} remains elusive. 
It can be stated as follows.
\begin{conjecture-nn}[\ETH~\cite{ImpagliazzoPaturi2001,ImpagliazzoPaturiZane2001}]
	There is no algorithm that solves each instance of \textsc{$3$-SAT} on $n$ variables
	in time $2^{o(n)}$.
\end{conjecture-nn}

This conjecture initiated a rich theory of hardness results conditioned on \ETH 
(see e.g.\ the survey~\cite{LokshtanovMarxSaurabh2011} and~\cite[Chapter 14]{CyganEtAl2015}),
allowing for more precise lower bounds than the ones obtained from assumptions 
such as $\P \neq \NP$ or $\FPT \neq \W[1]$.

\subsection{Clique-Width, Branch Decompositions, and Module-Width}
We first define clique-width, 
introduced by Courcelle, Engelfriet, and Rozenberg~\cite{CourcelleEtAl1993},
and then the equivalent measure of \emph{module-width} that we will use in our algorithm.
We keep the definition of clique-width slightly informal 
and refer to~\cite{CourcelleEtAl1993,CourcelleOlariu2000} for more details.
The reason why we choose module-width over clique-width is that module-width allows 
for a slightly more compact description of our algorithm, 
since it suffices to consider a single operation in the dynamic programming 
instead of several.

Let $G$ be a graph. The \emph{clique-width} of $G$, denoted by $\cliquewidth(G)$,
is the minimum number of labels $\{1, \ldots, k\}$ 
needed to obtain $G$ using the following four operations:
\begin{enumerate}
	\item Create a new graph consisting of a single vertex labeled $i$.
	\item Take the disjoint union of two labeled graphs $G_1$ and $G_2$.
	\item Add all edges between pairs of vertices of label $i$ and label $j$.
	\item Relabel every vertex labeled $i$ to label $j$.
\end{enumerate}

We now turn to the definition of module-width which is based on 
the notion of a rooted branch decomposition.
\begin{definition}[Branch decomposition]
	Let $G$ be a graph.
	A \emph{branch decomposition} of $G$ is a pair $(T, \decf)$ of a subcubic tree $T$ and a bijection $\decf \colon V(G) \to \leaves(T)$.
	If $T$ is a caterpillar, then $(T, \decf)$ is called \emph{linear branch decomposition}.
	If $T$ is rooted, then we call $(T, \decf)$ a \emph{rooted branch decomposition}.
	In this case, for $t \in V(T)$, we denote by $T_t$ the subtree of $T$ rooted at $t$, 
	and we define $V_t \defeq \{v \in V(G) \mid \decf(v) \in \leaves(T_t)\}$,
	$\overline{V_t} \defeq V(G) \setminus V_t$, and
	$G_t \defeq G[V_t]$.
\end{definition}

Module-width is attributed to Rao~\cite{Rao2006,Rao2008}.\footnote{Note that in \cite{Rao2008}, 
module-width is referred to as \emph{modular-width} which usually has a different meaning, see e.g.~\cite{Coudert2019}.}
On a high level, the module-width of a rooted branch decomposition measures, at each of its nodes $t$,
the number of subsets of $\overline{V_t}$ that make up the intersection of $\overline{V_t}$ 
with the neighborhood of some vertex in $V_t$.
This naturally groups the vertices of $V_t$ into equivalence classes.
\begin{definition}[Module-width]
	Let $G$ be a graph, and $(T, \decf)$ be a rooted branch decomposition of $G$. 
	For each $t \in V(T)$, let $\sim_t$ be the equivalence relation on $V_t$ defined as follows:
	\begin{align*}
		\forall u, v \in V_t \colon u \sim_t v \Leftrightarrow N_G(u) \cap \overline{V_t} = N_G(v) \cap \overline{V_t}
	\end{align*}
	
	The \emph{module-width} of $(T, \decf)$ is $\mw(T, \decf) \defeq \max_{t \in V(T)} \card{V_t/{\sim_t}}$.
	The \emph{module-width of $G$}, denoted by $\modulew(G)$, 
	is the minimum module width over all rooted branch decompositions of $G$.
\end{definition}

\begin{theorem}[Rao, Thm.~6.6 in~\cite{Rao2006}]\label{thm:cw:mw}
	For any graph $G$, $\mw(G) \le \cw(G) \le 2 \cdot \mw(G)$,
	and given a decomposition of bounded clique-width, a decomposition of bounded module-width,
	and vice versa,
	can be constructed in time $\calO(n^2)$, where $n = \card{V(G)}$.
\end{theorem}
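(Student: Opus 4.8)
The plan is to establish the two bounds by translating between rooted branch decompositions and clique-width expressions, and to observe that each translation is a single pass over a tree with $\calO(n)$ nodes, which gives the $\calO(n^2)$ running time for both constructions.

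For the bound $\mw(G) \le 2\cw(G)$ I would start from a clique-width expression $\phi$ that uses $k = \cw(G)$ labels and read a branch decomposition off its syntax tree $T_\phi$. The leaves of $T_\phi$ are exactly the introduce-operations, so they are in bijection with $V(G)$ via a map $\decf$; the internal nodes are the binary disjoint unions together with the unary relabel and edge-insert operations. Suppressing the unary nodes turns $T_\phi$ into a subcubic tree $T$ whose leaves are still labelled by $V(G)$, so $(T,\decf)$ is a rooted branch decomposition. The key point is that the label carried by a vertex in the subexpression producing $G_t$ determines completely how that vertex is later joined to $\overline{V_t}$ by the remaining operations of $\phi$; hence two vertices of equal label lie in the same class of $\sim_t$, the label partition refines $V_t/{\sim_t}$, and $\card{V_t/{\sim_t}} \le k$. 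This gives $\mw(T,\decf) \le \cw(G) \le 2\cw(G)$.

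For the bound $\cw(G) \le \mw(G)$ I would go the other way: given a rooted branch decomposition $(T,\decf)$ achieving $\mw(G)$, I would build a clique-width expression by processing $T$ from the leaves up, maintaining the invariant that after $T_t$ has been handled there is an expression for $G_t$ in which the label of each vertex $v$ names its class $[v] \in V_t/{\sim_t}$, so that at most $\card{V_t/{\sim_t}} \le \mw(G)$ labels are in use. A leaf $\decf(v)$ is handled by a single introduce-operation. At an internal node $t$ with children $r,s$ I would use the operator $(\decaux_t, \bubblemap_r, \bubblemap_s)$: form the disjoint union of the expressions for $G_r$ and $G_s$, realise the join set $F$ by inserting, for each edge $\{A,B\} \in E(\decaux_t)$, all edges between the label of $A$ and the label of $B$, and finally relabel every bubble $Q_p$ to the label of its image $\bubblemap_p(Q_p)$, which collapses the labels back to exactly $V_t/{\sim_t}$.

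The main obstacle is the disjoint-union step of this second construction. To insert the joins prescribed by $\decaux_t$, the labels naming $V_r/{\sim_r}$ and those naming $V_s/{\sim_s}$ must be kept apart, so a careless implementation briefly uses $\card{V_r/{\sim_r}} + \card{V_s/{\sim_s}}$ labels at once before $\bubblemap_r$ and $\bubblemap_s$ collapse them. Keeping the number of simultaneously live labels controlled by the module-width of the nodes that the construction passes through is the delicate point, and it is what governs the constants appearing in the sandwich. Finally, both passes visit $\calO(n)$ nodes and spend $\calO(n)$ time per node to compute the operators, the neighbourhood classes, and the relabelings, so each direction runs in time $\calO(n^2)$, as claimed.
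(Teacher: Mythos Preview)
The paper does not prove this theorem; it is quoted from Rao's thesis and used as a black box, so there is no argument in the paper to compare your sketch against.

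On the substance of your sketch: the two constructions you describe are the standard ones and are correct, but observe what they actually yield. Your first construction shows that the label partition at any node of a clique-width expression refines $V_t/{\sim_t}$, so it gives $\mw(G) \le \cw(G)$ --- you even write this down before weakening it to $\mw(G) \le 2\cw(G)$. Your second construction needs, as you correctly flag, up to $\card{V_r/{\sim_r}} + \card{V_s/{\sim_s}} \le 2\,\mw(G)$ live labels at a join before the bubble maps collapse them, and this is tight in general; hence it delivers $\cw(G) \le 2\,\mw(G)$, not $\cw(G) \le \mw(G)$. Put together, your two arguments establish $\mw(G) \le \cw(G) \le 2\,\mw(G)$, which is how the inequality reads in Rao's work; the version printed here has $\cw$ and $\mw$ interchanged (for instance, $P_4$ has module-width $2$ but clique-width $3$, so $\cw(P_4) \le \mw(P_4)$ fails). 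The ``delicate point'' you identify is therefore not a gap to be closed by a cleverer construction --- the factor $2$ genuinely sits on the clique-width side --- and your sketch is already proving the correct form of the sandwich.
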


\paragraph*{The operator $(\decaux_t, \bubblemap_r, \bubblemap_s)$ of node $t$ with children $r$ and $s$.}
Let $(T, \decf)$ be a rooted branch decomposition of a graph $G$ and let $t \in V(T)$ be a node with children $r$ and $s$.
We now describe an operator associated with $t$ that tells us how the graph $G_t$ is formed from its subgraphs $G_r$ and $G_s$,
and how the equivalence classes of $\sim_t$ are formed from the equivalence classes of $\sim_r$ and $\sim_s$.
First, it is clear that $V_t = V_r \cup V_s$.
Since $G_r$ and $G_s$ are induced subgraphs of $G_t$, 
we furthermore know that $E(G_t[V_r]) = E(G_r)$ and $E(G_t[V_s]) = E(G_s)$,
so it remains to describe the edges between $V_r$ and $V_s$.
By the definition of module-width, we know that each pair of vertices $u, v \in V_r$ with $u \sim_r v$
has the same neighborhood in $\overline{V_r} = V_s \cup \overline{V_t}$.
Hence, for each vertex $z \in V_s$, 
we know that \emph{either both or neither} of $u$ and $v$ are adjacent to $z$.
In other words, for each pair $Q_r \in V_r/{\sim_r}$, $Q_s \in V_s{\sim_s}$, 
either all edges between each pair of a vertex from $Q_r$ and a vertex from $Q_s$ are present in $G_t$, or none of them.
This can be described by a bipartite graph $\decaux_t$ on bipartition $(V_r/{\sim_r}, V_s/{\sim_s})$ 
with $[u]_{\sim_r}[v]_{\sim_s} \in E(\decaux_t)$ if and only if $uv \in E(G_t)$. To summarize,
\begin{align*}
	E(G_t)\,&= E(G_r) \cup E(G_s) \cup F \mbox{ where } \\
		F\,&= \{uv \mid u \in V_r, v \in V_s, \{[u]_{\sim_r}, [v]_{\sim_s}\} \in E(\decaux_t)\}.
\end{align*}

By roughly the same reasoning, we can observe that the equivalence relation $\sim_t$ 
\emph{coarsens} the equivalence relations $\sim_r$ and $\sim_s$.
Consider again vertices $u, v \in V_r$ such that $u \sim_r v$.
Then, $N(u) \cap \overline{V_r} = N(v) \cap \overline{V_r}$,
and since $V_r \subseteq V_t$ we have that $\overline{V_t} \subseteq \overline{V_r}$,
which implies that $N(u) \cap \overline{V_t} = N(v) \cap \overline{V_t}$, so $u \sim_t v$.
However, it may well be that there are vertices $u, v \in V_r$ with $u \not\sim_r v$,
but $u \sim_t v$: this is the case when $u$ and $v$ have the same neighbors in $\overline{V_t}$,
but different neighbors in $V_s$.
Lastly, note that there may also be vertices $v \in V_r$ and $z \in V_s$ such that $v \sim_t z$.

We have argued that each equivalence class of $\sim_t$ can be obtained by taking a subset of equivalence classes of 
$\sim_r$ and $\sim_s$, and joining them (in what we call a `bubble' below).
Formally,
there is a partition $\calP = \{P_1, \ldots, P_h\}$ of $V(\decaux_t) = V_r/{\sim_r} \cup V_s/{\sim_s}$ 
such that $V_t/{\sim_t} = \{Q_1, \ldots, Q_h\}$,
where for $1 \le i \le h$, $Q_i = \bigcup_{Q \in P_i} Q$.
For each $1 \le i \le h$, we call $P_i$ the \emph{bubble} of the resulting equivalence class $\bigcup_{Q \in P_i} Q$ of $\sim_t$.

As auxiliary structures, for $p \in \{r, s\}$, we let $\bubblemap_p \colon V_p/{\sim_p} \to V_t/{\sim_t}$ 
be the map such that for all $Q_p \in V_p/{\sim_p}$, $Q_p \subseteq \bubblemap_p(Q_p)$,
i.e.\ $\bubblemap_p(Q_p)$ is the equivalence class of $\sim_t$ whose bubble contains $Q_p$.
We call $(\decaux_t, \bubblemap_r, \bubblemap_s)$ the \emph{operator} of $t$.

\subsection{Colorings}
Let $G$ be a graph. 
An ordered partition $\calC = (C_1, \ldots, C_k)$ of $V(G)$ is called a \emph{coloring} of $G$ (with $k$ colors).
(Observe that for $i \in \{1, \ldots, k\}$, $C_i$ may be empty.)
For $i \in \{1, \ldots, k\}$, we call $C_i$ the \emph{color class $i$},
and say that the vertices in $C_i$ \emph{have color $i$}.
$\calC$ is called \emph{proper} if for all $i \in \{1, \ldots, k\}$, $C_i$ is an independent set in $G$.
The \emph{restriction} of a coloring $\calC = (C_1, \ldots, C_k)$ to a vertex set $S \subseteq V(G)$, 
is $\calC|_{S} \defeq (C_1 \cap S, \ldots, C_k \cap S)$.
In this case we say conversely that $\calC$ \emph{extends} $\calC|_S$.

Whenever convenient, we may alternatively denote a coloring of a graph with $k$ colors as a map
$\phi\colon V(G) \to \{1, \ldots, k\}$.
In this case, a restriction of $\phi$ to $S$ is the map $\phi|_S\colon S \to \{1, \ldots, k\}$
with $\phi|_S(v) = \phi(v)$ for all $v \in S$.
For any $T \subseteq V(G)$ with $S \subseteq T$, we say that $\phi|_T$ extends $\phi|_S$.

A proper coloring $(C_1, \ldots, C_k)$ is called a \emph{$b$-coloring}, 
if for all $i \in \{1, \ldots, k\}$, there is a vertex $v_i \in C_i$,
called \emph{$b$-vertex of color $i$}, 
such that for all $j \in \{1, \ldots, k\} \setminus \{i\}$, $N_G(v_i) \cap C_j \neq \emptyset$.
In this work, we study the following computational problem.
\fancyproblemdef
	{$b$-Coloring}
	{Graph $G$, integer $k$}
	{Does $G$ have a $b$-coloring with $k$ colors?}

We sometimes denote a $b$-coloring $\calC = (C_1, \ldots, C_k)$ by $(\calC, B = \{v_1, \ldots, v_k\})$, 
where for all $i \in \{1, \ldots, k\}$, $v_i$ is a $b$-vertex of color $i$.
In this case, $B$ can be understood as the set containing a \emph{witness $b$-vertex} for each color class. 

The following definition will be key to the algorithms presented in the next sections.
\begin{definition}[Partial $b$-Coloring]\label{def:partbcol}
	Let $G$ be a graph and $k \in \bN$.
	For an induced subgraph $H$ of $G$,
	a \emph{partial $b$-coloring} of $H$ is a pair $(\calC, B)$ of a proper coloring $\calC = (C_1, \ldots, C_k)$ of $H$
	and a subset $B \subseteq V(H)$ such that for all $i \in [k]$, $\card{C_i \cap B} \le 1$.
	We call the vertices in $B$ the \emph{partial $b$-vertices}.
\end{definition}

\subsection{Distance-hereditary graphs}\label{sec:bcol:distance:hereditary}
In their work on $P_4$-sparse graphs, Bonomo et al.~\cite{BonomoEtAl2009} asked whether $b$-{\sc Coloring} is polynomial-time solvable on the class of distance-hereditary graphs. Havet et al.~\cite{HavetSalesSampaio2012} claimed to answer this question in the negative way, showing that $b$-{\sc Coloring} is \NP-complete on chordal distance-hereditary graphs. Their proof, however, contains a flaw and the graph constructed in their reduction, even though indeed chordal, fails to be distance-hereditary. In what follows, we briefly describe their reduction and argue that the graph constructed is not distance-hereditary.

The reduction presented in~\cite{HavetSalesSampaio2012} is from {\sc 3-Edge Coloring} restricted to the class of 3-regular graphs. Given an instance $G$ for {\sc 3-Edge Coloring} with $V(G)=\{v_1,\ldots,v_n\}$, they construct a graph $H$ as follows. The vertex set of $H$ contains a copy of $V(G)$ plus one vertex associated with each edge of $G$. We denote by $e_{xy}$ the vertex corresponding to the edge $xy$. The vertices of $V(G)$ form a clique in $H$, the vertices corresponding to edges form an independent set, and for each edge $xy\in E(G)$, the vertex $e_{xy}$ is adjacent to the copy of $x$ and $y$ in $H$. The connected component of $H$ induced by these vertices is therefore a split graph. Finally, they add three disjoint copies of $K_{1,n+2}$ to $H$. It is thus easy to see that $H$ is a chordal graph. However, let $xz$ and $yz$ be two edges of $G$ sharing one endpoint. Then the subgraph of $H$ induced by $\{x,y,z,e_{xz},e_{yz}\}$ is isomorphic to a gem (see Figure~\ref{fig:flaw}). As shown by Bandelt and Mulder~\cite{Bandelt1986}, distance-hereditary graphs are gem-free graphs. This shows that the graph $H$ is not a distance-hereditary graph.
\begin{figure}
	\centering
	\includegraphics[height=1.8cm]{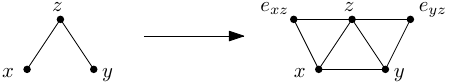}
	\caption{}
	\label{fig:flaw}
\end{figure}

\subsection{Parameterized by vertex cover}
In this subsection we prove that \bcol is \FPT when parameterized by vertex cover. We will do so by providing a $2^{\calO(\tw\cdot k)}n$ time algorithm for the problem parameterized by the tree-width of the input graph plus number of colors. The result for vertex cover will then follow from the fact that the vertex cover number of a graph is always at most its tree-width, and a $b$-coloring of a graph with vertex cover $\ell$ can have at most $\ell+1$ many colors. Indeed, either all $b$-vertices are contained in the vertex cover, in which case there are at most $\ell$ of them, or there is one outside, whose degree is at most $\ell$, and hence it can see at most $\ell$ colors in its neighborhood.
\begin{definition}
    Let $G$ be a graph.
    A \emph{tree decomposition} of $G$ is a pair $(T, \calB = \{B_t \mid t \in V(T)\})$, where $T$ is a tree, and the sets in $\calB$ are called \emph{bags},
    satisfying the following conditions.
    \begin{enumerate}
        \item $\bigcup_{t \in V(T)} B_t = V(G)$.
        \item For each $uv \in E(G)$, there is some $t \in V(T)$ such that $\{u, v\} \subseteq B_t$.
        \item For each $v \in V(G)$, $T[\{t \in V(T) \mid v \in B_t\}]$ is connected.
    \end{enumerate}
    The \emph{width} of a tree decomposition is $\max_{t \in V(T)} \card{B_t} - 1$ and the \emph{tree-width} of $G$ is the minimum width over all its tree decompositions.
\end{definition}
\begin{definition}
    A tree decomposition $(T, \calB = \{B_t \mid t \in V(T)\})$ of a graph $G$ is called \emph{nice}
    if $T$ is a rooted tree and each node $t \in V(T)$ is one of the following types:
    \begin{description}
        \item[Leaf:] $t$ is a leaf of $T$ and $B_t = \emptyset$.
        \item[Introduce:] $t$ has a single child $s$ and $B_t = B_s \cup \{v\}$ for some $v \in V(G)$; we say that $v$ is \emph{introduced} at $t$.
        \item[Forget:] $t$ has a single child $s$ and $B_s = B_t \cup \{v\}$ for some vertex $v \in B_t$; we say that $v$ is \emph{forgotten} at $t$.
        \item[Join:] $t$ has two children, $s_1$ and $s_2$, and $B_t = B_{s_1} = B_{s_2}$.
    \end{description}
    For $t \in V(T)$, we let $T_t$ denote the subtree of $T$ rooted at $t$; we let $V_t = \bigcup_{s \in V(T_t)} B_s$ and $G_t = G[V_t]$.
\end{definition}
\begin{theorem}[Korhonen~\cite{Korhonen2021}]\label{thm:Korhonen}
    There is an algorithm that given a graph $G$ on $n$ vertices and an integer $k$,
    in $2^{\calO(w)}n$ time
    either outputs a tree decomposition of $G$ of width at most $2k+1$ or concludes that the tree-width of $G$ is more than $k$.
\end{theorem}
\begin{lemma}[Kloks~\cite{Kloks1994}, verbatim from~\cite{CyganEtAl2015}]\label{lem:Kloks}
    If a graph $G$ admits a tree deecomposition of width at most $k$,
    then it also admits a nice tree decomposition of width at most $k$.
    Moreover, given a tree decomposition $(T, \calB)$ of $G$ of width at most $k$,
    one can in time $\calO(k^2\cdot\max\{\card{V(G)}, \card{V(T)}\})$
    find a nice tree decomposition of $G$ that has at most $\calO(k\card{V(G)})$ nodes.
\end{lemma}
\begin{proposition}\label{prop:tw:numcol}
    \textsc{$b$-Coloring} can be solved in $2^{\calO(\tw\cdot k)}n$ time,
    where $n$ is the number of vertices and $\tw$ the tree-width of the input graph, and $k$ the number of colors.
\end{proposition}
\begin{proof}
    By~\cref{thm:Korhonen,lem:Kloks} 
    we can assume that we have a nice tree decomposition $(T, \calB = \{B_t \mid t \in V(T)\})$ of $G$ of width $w \le 2\tw + 1$ and with $\calO(wn)$ nodes after spending $2^{\calO(\tw)}n$ time.
    We do bottom-up dynamic programming along $(T, \calB)$.
    
    The table entries of the dynamic programming and their invariant are as follows.
    Let $t \in V(T)$ be a node of $(T, \calB)$.
    Then, we let $\dptab_t[\gamma, C, P, \sigma] = 1$ if there is a partial $b$-coloring $\gamma_t$ of $G_t$ with the following properties, and $0$ otherwise:
    
    \begin{enumerate}
        \item $\gamma \colon B_t \to [k]$ is a proper coloring with $\gamma = \gamma_t|_{B_t}$.
        \item $P \subseteq B_t$ is the set of partial $b$-vertices of $\gamma_t$ that are contained in $B_t$.
        \item $\sigma \colon P \to 2^{[k]}$ is a map such that for each $p \in P$, $\sigma(p)$ is the set of colors that appear in the neighborhood of $p$ in $\gamma_t$.
        \item $C \subseteq [k]$, where $\gamma(P) \subseteq C$, is the set of colors that have a partial $b$-vertex in $\gamma_t$. Each partial $b$-vertex not contained in $B_t$ is a $b$-vertex.
    \end{enumerate}
    
    We observe that at each node $t \in V(T)$ there are at most $2^{\calO(wk)}$ table entries;
    moreover, once the table entries have been computed correctly,
    we know that $G$ has a $b$-coloring with $k$ colors if and only if at the root $\mathfrak{r}$ of $T$ there is a table entry $\dptab_{\mathfrak{r}}[\gamma, P, \sigma, C] = 1$, where $C = [k]$, and for all $p \in P$, $\sigma(p) = [k]$.
    We discuss how to compute the table entries for each type of node in $(T, \calB)$;
    we assume that initially all table entries are set to $0$.
    \begin{description}
        \item[Leaf.] If $t$ is a leaf, then it is trivial. For technical reasons, we assume that there is a table entry $\dptab_t[\emptyset, \emptyset, \emptyset, \emptyset] = 1$.
        \item[Introduce.] If $t$ is an introduce node, let $s$ be its child and $v$ the vertex introduced at $v$. 
        Let $\gamma$ be a proper $k$-coloring of $G[B_t]$. Each neighbor of $v$ that is a partial $b$-vertex for its color has to mark the color $\gamma(v)$ as seen in its neighborhood. To this end, for each $P \subseteq B_t$ and $\sigma\colon P \to 2^{[k]}$, we say that a map $\sigma_s\colon P \to 2^{[k]}$ is \emph{compatible with $\sigma$} 
        if for all $p \in P \cap N(v)$, $\sigma(p) = \sigma_s(p) \cup \{\gamma(v)\}$, 
        and for all $p \in P \setminus N[v]$, $\sigma(p) = \sigma_s(p)$.
        
        We first discuss how to deal with the case when $v$ is not a partial $b$-vertex for its color.
        We consider each set $C \subseteq [k]$, 
        each $P \subseteq B_t \setminus \{v\}$, 
        and each map $\sigma\colon P \to 2^{[k]}$. 
        % with $\gamma(v) \in \sigma(p)$ for all $p \in B_t \cap N(v)$ and
        We set $\dptab_t[\gamma, P, \sigma, C]$ to $1$ if there is a map $\sigma_s \colon P \to 2^{[k]}$ compatible with $\sigma$ and
        such that $\dptab_s[\gamma|_{B_s}, P, \sigma_s, C] = 1$.
        
        Next, we consider the case when $v$ is a partial $b$-vertex for its color.
        Then we consider each set $C \subseteq [k]$ with $\gamma(v) \in C$,
        and each $P \subseteq B_t$ with $v \in P$, 
        and each map $\sigma\colon P \to 2^{[k]}$ where $\sigma(v) = \gamma(N(v))$.
        We set $\dptab_t[\gamma, P, \sigma, C]$ to $1$ if there is a map $\sigma_s\colon P \setminus \{v\} \to 2^{[k]}$ that is compatible with $\sigma$ and such that $\dptab_s[\gamma|_{B_s}, P \setminus \{v\}, \sigma_s, C \setminus \{\gamma(v)\}] = 1$.
        \item[Forget.] If $t$ is a forget node, let $s$ be its child and $v$ be the vertex forgotten at $t$. The only thing we have to ensure here is that if $v$ was a partial $b$-vertex for its color, then in fact it was a $b$-vertex for its color.
        We proceed as follows.
        We set $\dptab_t[\gamma, P, \sigma, C]$ to $1$ if $\dptab_s[\gamma_s, P_s, \sigma_s, C] = 1$ where 
        $\gamma_s$ is an extension of $\gamma$ (assigning $v$ a color), 
        and either
        \begin{itemize}
            \item $v \notin P_s$, $P_s = P$, and $\sigma_s = \sigma$, or
            \item $v \in P_s$, $P = P_s \setminus \{v\}$, $\sigma_s|_{B_t} = \sigma$ and $\sigma_s(v) = [k]$.
        \end{itemize}
        \item[Join.] If $t$ is a join node, let $s_1$ and $s_2$ be its children.
        Here we only have to mark, for each partial $b$-vertex contained in $B_t$, the colors it has seen in $G_{s_1}$ and in $G_{s_2}$.
        Therefore we proceed as follows.
        We set $\dptab_t[\gamma, P, \sigma, C]$ to $1$ if 
        there exist $C_1, C_2 \subseteq [k]$ with $C_1 \cup C_2 = C$;
        and for $i \in [2]$, $\sigma_i \colon P \to 2^{[k]}$
        such that for all $p \in P$, $\sigma(p) = \sigma_1(p) \cup \sigma_2(p)$,
        and such that $\dptab_{s_i}[\gamma, P_i, \sigma_i, C_i] = 1$ for all $i \in [2]$.
    \end{description}
    
    Correctness of the algorithm follows from its description.
    Regarding its run time, we observe that for each node $t \in V(T)$, 
    all table entries $\dptab_t[\cdot]$ can be computed in time $2^{\calO(wk)}$. 
    Since the number of nodes in $T$ is at most $\calO(wn)$,
    the algorithm runs in time $2^{\calO(wk)}n = 2^{\calO(\tw\cdot k)}n$.
\end{proof}

\begin{corollary}\label{coro:vc}
    \textsc{$b$-Coloring} can be solved in $2^{\calO(\ell^2)}n$ time where
    $n$ is the number of vertices and $\ell$ the vertex cover number of the input graph.
\end{corollary}
\begin{proof}
    Let $G$ be the input graph with vertex cover number $\ell$.
    It is well-known that a vertex cover of size $\ell$ of $G$, which can be found in $\calO(1.2738^\ell + \ell n)$ time~\cite{ChenEtAl2010},
    can be used to give a path decomposition of $G$ of width (at most) $\ell$ in $\calO(n)$ time.
    Together with the fact that each $b$-coloring of a graph with vertex cover number $\ell$ can have at most $\ell + 1$ colors, the result follows from \cref{prop:tw:numcol}.
\end{proof}

\subsection{Chordal graphs}
Another consequence of \cref{prop:tw:numcol} is that \textsc{$b$-Coloring} is fixed-parameter tractable on chordal graphs parameterized by the number of colors; which answers an open question of Sampaio~\cite{SampaioThesis}.
\begin{corollary}
    \textsc{$b$-Coloring} can be solved in $2^{\calO(k^2)}n$ time on chordal graphs with $n$ vertices.
\end{corollary}
\begin{proof}
    Let $(G, k)$ be an instance of \textsc{$b$-Coloring} such that $G$ is a chordal graph.
	If the maximum clique size in $G$ is more than $k$, then $G$ has no proper coloring,
	and therefore no $b$-coloring, with $k$ colors.
	We may assume that the maximum clique size in $G$ is at most $k$.
	This in turn implies that the treewidth of $G$ is at most $k$,
	since a clique tree of $G$ (which can be found in linear time~\cite{BlairPeyton1993})
	is in fact a tree decomposition of width at most $k$ of $G$.
	We can therefore apply \cref{prop:tw:numcol}.
	
	Note that even though the algorithm of~\cite{BlairPeyton1993} implies a linear dependence on the number of edges in the input graph, this can be avoided by the following observation.
	If an $n$-vertex graph has tree-width at most $w$,
	then it has at most $wn$ edges.
	Therefore, if the number of edges in $G$ is more than $kn$ then we can report that $(G, k)$ is a \no-instance;
	otherwise, the dependence on the number of edges is subsumed by the run time of the algorithm from \cref{prop:tw:numcol}.
\end{proof}

%%%%%%%%%% CW ALGO %%%%%%%%%%

\section{Parameterized by Clique-Width}
In this section, we consider the $b$-coloring problem parameterized by the clique-width of the input graph.
We will work with decompositions of bounded \emph{module-width}, which is equivalent for our purposes, 
see Theorem~\ref{thm:cw:mw}.

The main contribution of this section is an algorithm that given a graph $G$ on $n$ vertices and one of its
rooted branch decompositions of module-width $\givenmw$, and an integer $k$, decides whether $G$
has a $b$-coloring with $k$ colors in time $n^{2^{\calO(\givenmw)}}$.
Before we proceed, we observe that \bcol is \Wone-hard in this parameterization, and that
the exponential dependence on $\givenmw$ of the degree of the polynomial 
in the runtime is probably difficult to avoid.
\begin{proposition}\label{prop:bcol:lower:bound}
	The \bcol problem on graphs on $n$ vertices 
	parameterized by their module-width $\givenmw$ is \Wone-hard and
	cannot be solved in time $n^{2^{o(\givenmw)}}$, unless \ETH fails.
	Moreover, the hardness holds even when a linear branch decomposition of width $\givenmw$ is provided.
\end{proposition}
\begin{proof}
	Fomin et al.~\cite{Fomin2018} showed that the \textsc{Graph Coloring} problem 
	which given a graph $G$ of module-width $w$ and an integer $k$ asks for a proper coloring of $G$ with $k$ colors
	cannot be solved in time $n^{2^{o(\givenmw)}}$ unless \ETH fails, 
	even when a linear branch decomposition of module-width $w$ is provided.
	Using \textsc{Graph Coloring} in this setting as a starting point of a reduction, 
	we can add a $k$-clique to the input graph. 
	The resulting graph has a $b$-coloring with $k$ colors 
	if and only if the original graph has a proper coloring with $k$ colors
	(take the vertices in the $k$-clique as the $b$-vertices).
	It is not difficult to see that the given branch decomposition can be extended
	to include the vertices of the added $k$-clique without increasing its module-width
	by too much.
	\Wone-hardness parameterized by $\givenmw$ can be observed using the same argument,
	even as a consequence of an earlier result~\cite{FominEtAl2010}.
\end{proof}

\subsection{Outline of the Algorithm}
Throughout the following, we are given a graph $G$ and one of its rooted branch decompositions $(T, \decf)$
of module-width $\givenmw = \modulew(T, \decf)$ and we want to find a $b$-coloring of $G$ with $k$ colors,
if it exists.
In particular, our algorithm will find a $b$-coloring $\calC$ together with a set of \emph{witness $b$-vertices},
containing precisely one $b$-vertex for each color class of $\calC$, if it exists.
This will be done via dynamic programming along $T$, and for each node $t \in V(T)$, 
the partial solutions associated with $t$ are partial $b$-colorings of $G_t$ (recall Definition~\ref{def:partbcol}).

To obtain an efficient algorithm, we require a compact representation of the partial $b$-colorings of each
subgraph $G_t$ associated with a node $t \in V(T)$.
To that end, we introduce the notion of a \emph{$t$-signature} of a partial $b$-coloring.
Two partial $b$-colorings with the same $t$-signature will be interchangeable for the sake of our algorithm,
therefore the number of table entries at each node $t$ will be bounded by the number of $t$-signatures.

Let $(\calC, B)$ be a partial $b$-coloring of $G_t$.
For $(\calC, B)$ to be extended to a $b$-coloring $(\calC', B')$ of the entire graph $G$, 
we have to ensure that two things happen for each color class $C \in \calC$:
\begin{enumerate}
	\item\label{enum:bcol:cond:A} The extension of $C$ in $\calC'$ is an independent set in $G$.
	\item\label{enum:bcol:cond:B} There is a witness $b$-vertex in $B'$ for the extension of $C$ in $\calC'$.
\end{enumerate}

The $t$-signature has to represent a partial $b$-coloring faithfully enough 
so that we can keep track of all the ways in which the above two conditions can be satisfied for each of its color classes
`in the future'.
At the same time, its definition has to enable us to significantly compress 
the information about partial $b$-colorings of $G_t$.
This happens in the following way.
We categorize color classes of partial $b$-colorings of $G_t$ according to \emph{$t$-types}.
If two color classes $C_1$, $C_2$ of a partial $b$-coloring $(\calC, B)$ have the same $t$-type,
then the above two conditions can be satisfied for $C_1$ and $C_2$ by extensions of $(\calC, B)$ in the exact same ways.
This allows us to forget about the `names' of the color classes in a partial $b$-coloring, 
but instead to only remember for each $t$-type how many color classes with that type there are.
This is precisely the information that is stored in a $t$-signature.

Now, if we can bound the number of $t$-types by some function of the module-width $\givenmw$,
say~$f(w)$, then the number of $t$-signatures is upper bounded by $k^{f(w)} \le n^{f(w)}$.
(There are at most $k$ colors, so in particular there are at most $k$ colors with a given $t$-type.)
This translates directly to an upper bound on the number of table entries in the dynamic programming algorithm,
which, up to some constants in the degree of the polynomial, bounds the runtime of the resulting algorithm.

Let us discuss the information that goes into the definition of a $t$-type.
Let $C$ be a color class in a partial $b$-coloring $(\calC, B)$ of $G_t$.
To keep track of which vertices from $\overline{V_t}$ can be added to $C$ without introducing a coloring conflict,
it suffices to store which equivalence classes of $\sim_t$ have vertices in $C$,\footnote{This is similar to the algorithm 
of Wanke for \textsc{Graph Coloring} on graphs of bounded NLC-width~\cite{Wan94}.}
since all vertices in a given equivalence class have the same neighbors in $\overline{V_t}$.
This way we can ensure that condition~\ref{enum:bcol:cond:A} is satisfied.

To verify if condition~\ref{enum:bcol:cond:B} is satisfied
we have to store some information about the partial $b$-vertices.
Naturally, we record whether or not $B$ contains a partial $b$-vertex of $C$,
but we need to store more information.
Suppose that $B$ contains the partial $b$-vertex $v$ of $C$.
In a straightforward approach, 
we would simply keep track of the color classes that already appear in the neighborhood of $v$.
This way we could easily decide at which point during the execution of the algorithm, a partial $b$-vertex turns into a $b$-vertex.
However, this results in prohibitively large table entries, 
since there are $2^{k-1}$ subsets of colors that we would have to consider,
which for our purpose is no better than $2^n$.

We overcome this issue with the following symmetry breaking trick:
We do \emph{not} record which color classes the partial $b$-vertex of $C$ already sees/still needs to see.
Instead, we record which equivalence classes $Q \in V_t/{\sim_t}$ contain a partial $b$-vertex $w$ of \emph{some other color class} such that $N(w) \cap C = \emptyset$.
%Instead, we record for which equivalence classes $Q \in V_t/{\sim_t}$ we need to add a \emph{future neighbor of $Q$},
%i.e.\ a vertex from $N(Q) \cap \overline{V_t}$, to $C$,
%such that the partial $b$-vertex from \emph{some other color class $C'$} sees color $C$ in its neighborhood.
%
Suppose that some equivalence class $Q \in V_t/{\sim_t}$ contains the partial $b$-vertex $w \in B$ 
of another color class $C' \neq C$, such that $w$ has no neighbor of color $C$ in $V_t$.
For $w$ to become a $b$-vertex of its color,
the color class $C$ must be extended with a neighbor of $w$ in the future, i.e.\ in $\overline{V_t}$.
The neighborhood of $w$ in $\overline{V_t}$ is precisely $N_G(Q) \cap \overline{V_t}$,
therefore we can concisely model this situation as
color class $C$ requiring to contain a vertex among the future neighbors of $Q$.
In this situation, we say that 
\begin{center}
	\emph{color class $C$ has demand to the future neighbors of $Q$}.
\end{center}

The $t$-type records for each equivalence class $Q$ of $\sim_t$, 
if a color class contains vertices of $Q$, or if it has demand to the future of $Q$,
or none of the two.
Note that if a color class both contains a vertex from $Q$ and has demand to the future of $Q$,
we already know that we can disregard the corresponding partial $b$-coloring:
In the corresponding color class, 
we cannot add any future neighbors of $Q$ without creating a coloring conflict,
and if we do not add a future neighbor of $Q$, then there is some color class whose partial $b$-vertex
will never become a $b$-vertex.

Now, if we have a partial $b$-coloring in which every color class has a partial $b$-vertex,
and all demands have been fulfilled, meaning that there is no color class that has demand to 
the future of some equivalence class of $\sim_t$, then we know that we actually have a $b$-coloring.
Moreover, the number of $t$-types is $2^{\calO(\givenmw)}$,
so the resulting algorithm runs in time $n^{2^{\calO(\givenmw)}}$ (see above).

\subsection{$t$-Types and $t$-Signatures}
In this section we introduce the basic concepts that we alluded to in the above description,
namely the notion of a \emph{$t$-type} and of a \emph{$t$-signature}, 
where $t$ is some node in the given branch decomposition.
A $t$-type is meant to capture the necessary information of a color class in a partial $b$-coloring of $G_t$.
However, we cannot give the definition of a $t$-type as a property of a vertex set alone:
a color class $C$ may have demand to the future of an equivalence class, which is because there is a partial $b$-vertex
of \emph{another} color $C' \neq C$ that has no neighbor of color $C$ yet.
Therefore, we first give the definition of a $t$-type abstractly,
i.e.\ absent of any partial $b$-coloring or color class,
and then define what it means for a color class to be of a certain $t$-type \emph{within a partial $b$-coloring}.

The $t$-type is a pair of a bit that is meant to tell us whether or not a coloring contains a partial $b$-vertex of that color,
and a map that tells us for each equivalence class, whether there is a vertex of the color 
in the equivalence class (via the value $\ccontains$), or if the color has demand to the future neighbors 
of the equivalence class (via the value $\cdemand$), 
or none of the two (via the value $\cnone$).
\begin{definition}[$t$-Type]
	Let $G$ be a graph with rooted branch decomposition $(T, \decf)$ and let $t \in V(T)$.
	A \emph{$t$-type} is a pair $(\cdesc, \cbvtx)$ of a map 
	$\cdesc \colon Q_t/{\sim_t} \to \{\cnone,\ccontains,\cdemand\}$ and
	a bit $\cbvtx \in \{0, 1\}$.
	We denote the set of all $t$-types by $\ctypes_t$.
\end{definition}

Before we proceed, we observe an upper bound on the number of $t$-types.
For the component $\cbvtx$, we clearly only have two choices,
and for each equivalence class $Q$ of~$\sim_t$, the entry $\cdesc(Q)$ takes one of three values.
\begin{observation}\label{obs:number:of:types}
	Let $(T, \decf)$ be a rooted branch decomposition of module-width $\givenmw = \modulew(T, \decf)$.
	For each $t \in V(T)$, $\card{\ctypes_t} = 2\cdot 3^{\card{V_t/{\sim_t}}} \le 2\cdot 3^{\givenmw}$.
\end{observation}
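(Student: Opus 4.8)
The plan is to count the $t$-types directly from the definition. A $t$-type is a pair $(\cdesc, \cbvtx)$ where $\cbvtx \in \{0,1\}$ and $\cdesc$ is a function from $V_t/{\sim_t}$ to the three-element set $\{\cnone, \ccontains, \cdemand\}$. So I would first observe that the number of such pairs is at most the number of choices for $\cbvtx$ times the number of such functions $\cdesc$, i.e.\ $2 \cdot 3^{\card{V_t/{\sim_t}}}$. This already gives $\card{\ctypes_t} \le 2 \cdot 3^{\card{V_t/{\sim_t}}}$. To get equality as stated, I would note that conversely every pair $(\cdesc, \cbvtx)$ in this set is in fact realized: given any such pair, one can exhibit a partial $b$-coloring of $G_t$ (for instance, place one vertex of each equivalence class $Q$ with $\cdesc(Q) = \ccontains$ into $C_i$, leave the rest out, and set the partial $b$-vertex bit of $C_i$ according to $\cbvtx$), though strictly speaking the observation as phrased only needs the counting of the syntactic objects, so I would keep this realizability remark brief or omit it depending on how the authors use the word ``type''.

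Next I would invoke Theorem~\ref{thm:cw:mw}'s surrounding setup — actually the relevant bound is immediate from the definition of module-width: $\mw(T, \decf) = \max_{t \in V(T)} \card{V_t/{\sim_t}}$, so for every node $t$ we have $\card{V_t/{\sim_t}} \le \givenmw$. Plugging this into the previous line gives $\card{\ctypes_t} = 2 \cdot 3^{\card{V_t/{\sim_t}}} \le 2 \cdot 3^{\givenmw}$, which is exactly the claimed inequality. The chain of reasoning is: (i) a $t$-type is determined by a bit and a ternary labelling of the equivalence classes; (ii) hence there are exactly $2 \cdot 3^{\card{V_t/{\sim_t}}}$ of them; (iii) by definition of module-width, $\card{V_t/{\sim_t}} \le \givenmw$; (iv) combine.

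I do not anticipate any real obstacle here — this is a pure counting observation and the only thing to be slightly careful about is whether the intended reading of $\card{\ctypes_t}$ is ``number of syntactically possible types'' (in which case the equality $2 \cdot 3^{\card{V_t/{\sim_t}}}$ is definitional) or ``number of types actually attained by some partial $b$-coloring'' (in which case one should also check that each is attained, or else the equality should be a $\le$). Given that the bound is used later only as an upper bound on table size, either reading is fine, and I would present the argument with the equality interpreted in the syntactic sense, matching the phrasing ``We denote the set of all $t$-types by $\ctypes_t$'' in the definition.
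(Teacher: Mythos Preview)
Your proposal is correct and matches the paper's intent: the paper states this as an observation without proof, treating it as immediate from the definition, and your direct counting argument (two choices for $\cbvtx$, three choices per equivalence class for $\cdesc$, then bound $\card{V_t/{\sim_t}} \le \givenmw$ from the definition of module-width) is exactly the intended justification. Your remark about the syntactic reading of $\ctypes_t$ is also on point, since the paper writes ``We denote the set of all $t$-types by $\ctypes_t$'' right after the definition.
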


We now define what it means for a color class 
to be of a certain $t$-type within a partial $b$-coloring of $G_t$.
This is basically a formalization of the above discussion, 
but it holds one aspect that is of importance of the algorithm and the arguments to follow.
We discuss this after the following definition,
which is illustrated in Figure~\ref{fig:bcol:type}.
\begin{figure}
	\centering
	\includegraphics[height=.17\textheight]{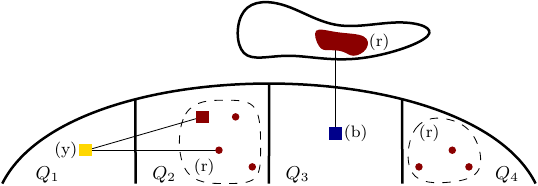}
	\caption{Illustration of the definition of a color class being of a certain \emph{$t$-type} inside a partial $b$-coloring of $G_t$.
			The large square vertices are partial $b$-vertices for their color.
			The type of the red (r) color in the coloring is as follows.
			Since it has a $b$-vertex (the one in $Q_2$), we have that $\cbvtx = 1$.
			Since $Q_2$ and $Q_4$ have red vertices, $\cdesc(Q_2) = \cdesc(Q_4) = \ccontains$.
			$Q_1$ and $Q_3$ do not have red vertices.
			$Q_1$ contains the $b$-vertex of color yellow (y),
			but this vertex already has a red neighbor. 
			Therefore, $\cdesc(Q_1) = \cnone$.
			Finally, $Q_3$ has the $b$-vertex of color blue (b),
			and this vertex does not have a red neighbor yet.
			Therefore, there has to be a red vertex among the future neighbors of $Q_3$.
			Hence, $\cdesc(Q_3) = \cdemand$.
			}
	\label{fig:bcol:type}
\end{figure}
\begin{definition}
	Let $G$ be a graph with rooted branch decomposition $(T, \decf)$ and let $t \in V(T)$.
	Let $(\calC, B)$ be a partial $b$-coloring of $G_t$,
	let $C \in \calC$ be a color class, and let $\ctype = (\cdesc, \cbvtx) \in \ctypes_t$ be a $t$-type.
	We say that \emph{$C$ has $t$-type $\ctype$ in $(\calC, B)$} if
	\begin{enumerate}
		\item $\cbvtx = \card{C \cap B}$ and
		\item\label{def:bcol:type2:2} 
		for each $Q \in V_t/{\sim_t}$,
		\begin{enumerate}
			\item if $Q \cap C \neq \emptyset$, and there is no $v \in (B \setminus C) \cap Q$ 
				such that $N(v) \cap C = \emptyset$,
				then $\cdesc(Q) = \ccontains$,
			\item if $Q \cap C = \emptyset$ and there exists some $v \in (B \setminus C) \cap Q$
				such that $N(v) \cap C = \emptyset$, 
				then $\cdesc(Q) = \cdemand$, and
			\item if $Q \cap C = \emptyset$, and there is no $v \in (B \setminus C) \cap Q$ 
				such that $N(v) \cap C = \emptyset$,
				then $\cdesc(Q) = \cnone$.
		\end{enumerate}
	\end{enumerate}
\end{definition}

The reader may have observed that~\ref{def:bcol:type2:2} does not cover all the possibilities.
The situation that is not covered is when $Q \cap C \neq \emptyset$ and there is some $v \in (B \setminus C) \cap Q$
such that $N(v) \cap C = \emptyset$.
A priori, we can of course not exclude this as a possibility, 
but there is a simple reason that partial $b$-colorings that contain a color class in which this situation arises 
can be disregarded:
For the vertex $v$ to become a $b$-vertex for its color, we have to add a future neighbor of $Q$ to $C$;
but since $Q$ already contains a vertex from $C$ this means that the resulting set is not independent anymore.

We turn to the definition of a $t$-signature
which again is first given in abstract terms.
\begin{definition}[$t$-Signature]\label{def:sig}
	Let $G$ be a graph with rooted branch decomposition $(T, \decf)$,
	and let $t \in V(T)$.
	A \emph{$t$-signature} is a map $\csig_t \colon \ctypes_t \to \{0, 1, \ldots, k\}$ 
	such that  $\sum_{\ctype \in \ctypes_t}\csig_t(\ctype) = k$.
\end{definition}

The following bound on the number of $t$-signatures immediately follows from
Observation~\ref{obs:number:of:types}: for each $t$-type, 
the function takes one of $k+1 \le n + 1$ values.
\begin{observation}\label{obs:number:of:sig}
	Let $G$ be a graph on $n$ vertices and $(T, \decf)$ be one of its branch decompositions 
	of module-width $\givenmw = \modulew(T, \decf)$.
	For each $t \in V(T)$, there are at most $n^{2^{\calO(\givenmw)}}$ many $t$-signatures.
\end{observation}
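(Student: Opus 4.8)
The plan is to bound the number of $t$-signatures by bounding (a) the number of distinct $t$-types and (b) the number of maps from that set into $\{1,\dots,k\}$ whose values sum to $k$. Part (a) is exactly Observation~\ref{obs:number:of:types}: $\card{\ctypes_t} = 2\cdot 3^{\card{V_t/{\sim_t}}} \le 2\cdot 3^{\givenmw}$, so let $M \defeq 2\cdot 3^{\givenmw}$ be this bound. A $t$-signature is then a map $\csignature\colon \ctypes_t \to \{1,\dots,k\}$, and since $\card{\ctypes_t} \le M$, the number of such maps is at most $(k+1)^{M}$ (allowing each of the at most $M$ type-slots to take any value in $\{0,1,\dots,k\}$; strictly we only need the maps whose entries sum to $k$, which is a subset of all such maps, so $(k+1)^M$ is a valid upper bound).

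The second step is to convert $(k+1)^M$ into the claimed $n^{2^{\calO(\givenmw)}}$ form. Here I would use the trivial but essential observation that in any $b$-coloring of an $n$-vertex graph the number of colors satisfies $k \le n$ (each color class is nonempty, as it must contain a $b$-vertex), so we may assume $k \le n$ and hence $(k+1)^M \le (n+1)^M \le n^{2M}$ for $n \ge 2$ (and the statement is trivial for $n \le 1$). Finally, $2M = 4\cdot 3^{\givenmw} = 2^{2 + \givenmw \log_2 3} = 2^{\calO(\givenmw)}$, so the number of $t$-signatures is at most $n^{2^{\calO(\givenmw)}}$, as desired.

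There is essentially no obstacle here: the statement is a direct counting consequence of Observation~\ref{obs:number:of:types} together with the bound $k \le n$. The only point that needs a word of care is the passage from "maps with entries summing to $k$" to "all maps into $\{0,\dots,k\}$"—one should note we are overcounting, which is harmless for an upper bound—and the implicit use of $k \le n$, which one might instead fold in by simply stipulating throughout the algorithm that the relevant $k$ never exceeds $n$. I would present the argument in two or three sentences inline.
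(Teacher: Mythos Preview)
Your proposal is correct and matches the paper's approach: the paper simply states that the bound ``immediately follows from Observation~\ref{obs:number:of:types}'', and you have spelled out exactly that derivation (bounding the number of maps $\ctypes_t \to \{0,\ldots,k\}$ by $(k+1)^{2\cdot 3^{\givenmw}}$ and using $k \le n$). There is nothing to add.
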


A $t$-signature \emph{represents} a partial $b$-coloring $(\calC, B)$ of $G_t$
if for each $t$-type it counts correctly how many color classes in $\calC$
are of that $t$-type in $(\calC, B)$.
\begin{definition}\label{def:sig:rep}
	Let $G$ be a graph with rooted branch decomposition $(T, \decf)$, 
	and let $t \in V(T)$.
	Let furthermore $\csig_t$ be a $t$-signature and $(\calC, B)$ a partial $b$-coloring in $G_t$.
	We say that \emph{$\csig_t$ represents $(\calC, B)$} 
	if for each $t$-type $\ctype \in \ctypes_t$,
	there are precisely $\csig_t(\ctype)$ color classes in $(\calC, B)$ that have $t$-type $\ctype$ in $(\calC, B)$.
	
	We call a partial $b$-coloring of $G_t$ \emph{representable} 
	if there is a $t$-signature that represents it.
\end{definition}

Since throughout this section, we only consider $b$-colorings and partial $b$-colorings with $k$ (possibly empty) colors, \cref{def:sig,def:sig:rep} together imply that if a partial $b$-coloring is represented by a $t$-signature, then necessarily each of its color classes has a $t$-type:
\cref{def:sig} requires that for a $t$-signature $\csig_t$, the sum of $\csig_t(\ctype)$ over all $t$-types $\ctype$ is $k$, and any partial $b$-coloring in $G_t$ has $k$ colors.

We would like to remark once more that not all partial $b$-colorings of $G_t$ can be represented by a $t$-signature, since there is a case that a color class cannot be described by a $t$-type.
In this case the partial $b$-coloring is not representable.
Conversely, we can make the following observation about representable partial $b$-colorings
which is useful in several proofs and sometimes used without explicit reference.
\begin{observation}\label{obs:bcol:representable:cont}
	Let $G$ be a graph with rooted branch decomposition $(T, \decf)$,
	and let $t \in V(T)$.
	Let $(\calC, B)$ be a representable partial $b$-coloring of $G_t$,
	and let $C \in \calC$ be a color class whose $t$-type in $(\calC, B)$ is $(\cdesc, \cbvtx)$.
	If for some equivalence class $Q \in V_t/{\sim_t}$, $Q \cap C \neq \emptyset$, then $\cdesc(Q) = \ccontains$.
\end{observation}

\subsection{Compatibility}
Let $t \in V(T)\setminus \leaves(T)$ be an internal node of the given rooted branch decomposition,
let $r$ and $s$ be its children, 
and let $(\decaux_t, \bubblemap_r, \bubblemap_s)$ be the operator of $t$.
In our algorithm, we want to combine information about partial $b$-colorings of
$G_r$ and $G_s$ to obtain information about partial $b$-colorings of $G_t$.
We will try to obtain a color class of a partial $b$-coloring 
of $G_t$ by taking the union of a color class $C_r$ of a partial $b$-coloring of $G_r$
and a color class $C_s$ of a partial $b$-coloring of $G_s$.

However, in some cases this is not possible. 
For instance, when $C_r$ contains vertices
from some equivalence class $Q_r \in V_r/{\sim_r}$ and $C_s$ contains vertices from some equivalence
class $Q_s \in V_s/{\sim_s}$, and in the graph $\decaux_t$ of the operator of $t$,
we have that $Q_rQ_s \in E(\decaux_t)$. 
Then, in $G_t$ all edges between the set $Q_r$ and $Q_s$ are present 
which means that $C_r \cup C_s$ is not an independent set in $G_t$.

Another condition is necessary to ensure that several demands that \emph{have to be} met
at node $t$ are indeed met.
Let $C_t = C_r \cup C_s$ and
suppose there is an equivalence class $Q_t \in V_t/{\sim_t}$
that contains a vertex of $C_t$.
Suppose furthermore that there is another equivalence class $Q_r \in V_r/{\sim_r}$ 
contained in the bubble of $Q_t$
such that $C_r$ has demand to the future neighbors of $Q_r$.
Then, this demand must be fulfilled by a neighbor of $Q_r$ in $C_s$
for otherwise, the equivalence class $Q_t$ 
both contains vertices of $C_t$ and $C_t$ has demand to the future neighbors of $Q_t$.
The resulting partial $b$-coloring would not be representable.

The following definition formalizes this discussion and projects it down to the `type level';
we illustrate this notion in Figure~\ref{fig:bcol:compatible:types}.
\begin{figure}
	\centering
	\includegraphics[height=.14\textheight]{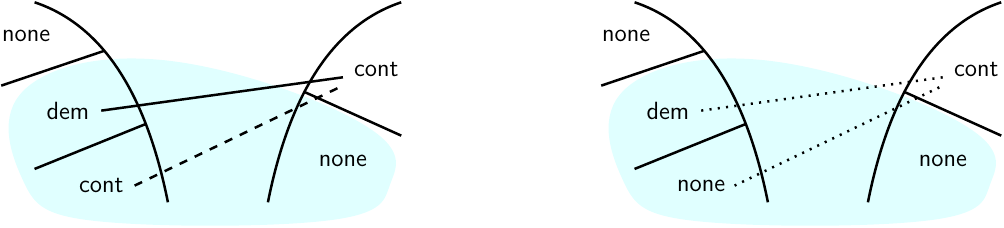}
	\caption{Illustration of Definition~\ref{def:compatibility}. 
		The shaded area shows a bubble and the labels on the equivalence classes correspond to type labelings.
		For the left hand side, note that between a pair of classes that are both labeled `$\ccontains$',
		there can be no edge in the operator.
		Moreover, since the bubble contains a class labeled $\ccontains$ and one labeled $\cdemand$,
		the demand of the latter has to be fulfilled at this node, i.e.\ there has to be an edge from this
		class to a `$\ccontains$'-class.
		The right side shows the situation when the `$\ccontains$'-class in the bubble is changed to `$\cnone$',
		in which case the dotted edges may or may not be present in the operator.
		}
	\label{fig:bcol:compatible:types}
\end{figure}
\begin{definition}[Compatible types]\label{def:compatibility}
	Let $G$ be a graph with rooted branch decomposition $(T, \decf)$.
	Let furthermore $t \in V(T) \setminus \leaves(T)$ with children $r$ and $s$, 
	and let $(\decaux_t, \bubblemap_r, \bubblemap_s)$ be the operator of $t$.
	Let $(\cdesc_r, \cbvtx_r) \in \ctypes_r$ and $(\cdesc_s, \cbvtx_s) \in \ctypes_s$.
	We say that $(\cdesc_r, \cbvtx_r)$ and $(\cdesc_s, \cbvtx_s)$ are 
	\emph{compatible} if the following conditions hold.
	\begin{enumerate}
		\item\label{def:compatibility:bvtx} $\cbvtx_r + \cbvtx_s \le 1$.
		\item\label{def:compatibility:proper} There is no pair $Q_r \in V_r/{\sim_r}$, $Q_s \in V_s/{\sim_s}$ 
			such that $Q_r Q_s \in E(\decaux_t)$ and
			$\cdesc_r(Q_r) = \cdesc_s(Q_s) = \ccontains$.
		\item\label{def:compatibility:demand}
			For each $Q \in V_t/{\sim_t}$ such that there exists a $p \in \{r, s\}$ and a $Q_p \in \bubblemap_p^{-1}(Q)$
			with $\cdesc_p(Q_p) = \ccontains$, the following holds.
		\begin{enumerate}
			\item For all $Q_r \in \bubblemap_r^{-1}(Q)$ with $\cdesc_r(Q_r) = \cdemand$,
				there is a $Q_s \in V_s/{\sim_s}$ with $\cdesc_s(Q_s) = \ccontains$ and $Q_rQ_s \in E(\decaux_t)$.
			\item Similarly, for all $Q_s \in \bubblemap_s^{-1}(Q)$ with $\cdesc_s(Q_s) = \cdemand$,
			there is a $Q_r \in V_r/{\sim_r}$ with $\cdesc_r(Q_r) = \ccontains$ and $Q_sQ_r \in E(\decaux_t)$.
		\end{enumerate}
	\end{enumerate}
\end{definition}

Given a pair of a color class $C_r$ of a partial $b$-coloring of $G_r$ and
a color class $C_s$ of a partial $b$-coloring of $G_s$ whose types in the respective colorings are compatible,
$C_r \cup C_s$, considered as a color class in a partial $b$-coloring of $G_t$,
has a fixed type.
We prove this later in the lemmas that attest the correctness of the algorithm,
but we already describe the construction of this type here,
mainly since the notion of compatibility of signatures that we give below,
requires this `merge type'.
\begin{definition}[Merge Type]\label{def:bcol:merge:type}
	Let $G$ be a graph with rooted branch decomposition $(T, \decf)$.
	Let furthermore $t \in V(T) \setminus \leaves(T)$ with children $r$ and $s$, 
	and let $(\decaux_t, \bubblemap_r, \bubblemap_s)$ be the operator of $t$.
	Let $\rho = (\cdesc_r, \cbvtx_r) \in \ctypes_r$ and $\sigma = (\cdesc_s, \cbvtx_s) \in \ctypes_s$
	be a pair of compatible types.
	The \emph{merge type} of $\rho$ and $\sigma$, denoted by $\mergetype(\rho, \sigma)$, 
	is the following $t$-type $(\cdesc_t, \cbvtx_t)$.
	\begin{enumerate}
		\item $\cbvtx_t = \cbvtx_r + \cbvtx_s$.
		\item For each $Q \in V_t/{\sim_t}$:
		\begin{enumerate}
			\item\label{bcol:merge:type:contains} 
				If for some $p \in \{r, s\}$, there exists a $Q_p \in \bubblemap_p^{-1}(Q)$ with $\cdesc_p(Q_p) = \ccontains$,
				then $\cdesc_t(Q) = \ccontains$.
			\item\label{bcol:merge:type:demand} 
				If \ref{bcol:merge:type:contains} does not apply and
				for some $p \in \{r, s\}$ there exists a $Q_p \in \bubblemap_p^{-1}(Q)$ with $\cdesc_p(Q_p) = \cdemand$
				and for $o \in \{r, s\} \setminus \{p\}$ and
				all $Q_pQ_o \in E(\decaux_t)$ we have $\cdesc_o(Q_o) \neq \ccontains$,
				then $\cdesc_t(Q) = \cdemand$.
			\item\label{bcol:merge:type:none}
				If neither~\ref{bcol:merge:type:contains} nor~\ref{bcol:merge:type:demand} applies,
				then $\cdesc_t(Q) = \cnone$.
		\end{enumerate}
	\end{enumerate}
\end{definition}

Towards a notion of compatibility of signatures, we first define a structure we call 
\emph{merge skeleton}. Given a node $t \in V(T)$ with children $r$ and $s$,
the merge skeleton is an edge-labeled bipartite graph whose vertices are the
$r$-types and the $s$-types, with the merge type of a compatible pair 
of types $\rho \in \ctypes_r$, $\sigma \in \ctypes_s$ written on the edge $\rho\sigma$.
Such an edge is meant to represent the fact that
taking the union of a color class $C_r$ that has $r$-type $\rho$ in a partial $b$-coloring of $G_r$
with a color class $C_s$ that has $s$-type $\sigma$ in a partial $b$-coloring of $G_s$
results in a color class of $t$-type $\mergetype(\rho, \sigma)$ in the partial $b$-coloring of $G_t$
that results from merging the partial $b$-colorings of $G_r$ and $G_s$.
\begin{definition}[Merge skeleton]
	Let $G$ be a graph and $(T, \decf)$ one of its rooted branch decompositions.
	Let $t \in V(T) \setminus \leaves(T)$ with children $r$ and $s$.
	The \emph{merge skeleton} of $r$ and $s$ is an edge-labeled bipartite graph
	$(\mergeaux, \malab)$ where
	\begin{itemize}
		\item $V(\mergeaux) = \ctypes_r \cup \ctypes_s$,
		\item for all $\rho \in \ctypes_r$, $\sigma \in \ctypes_s$, $\rho \sigma \in E(\mergeaux)$ 
			if and only if $\rho$ and $\sigma$ are compatible, and
		\item $\malab \colon E(\mergeaux) \to \ctypes_t$ is such that for all $\rho\sigma \in E(\mergeaux)$,
			$\malab(\rho\sigma)$ is the merge type of $\rho$ and $\sigma$.
	\end{itemize}
\end{definition}

Using the merge skeleton, we want to find out how to construct a $t$-signature of a partial $b$-coloring of $G_t$
that is obtained from a pair of a partial $b$-coloring for $G_r$ and one for $G_s$,
knowing only their signatures.
Any pair of an $r$-signature $\csig_r$ and an $s$-signature $\csig_s$ can `flesh out'
the merge skeleton $(\mergeaux, \malab)$ of $r$ and $s$, in the following sense.
We can obtain a map labeling the vertices of $\mergeaux$ that follows $\csig_r$ on $\ctypes_r$ and $\csig_s$ on $\ctypes_s$.
Then, an edge-labeling $\maassign$ of $\mergeaux$ with integers from $\{0, 1, \ldots, k\}$, 
such that for each vertex of $\mergeaux$,
the sum over its incident edges $e$ of $\maassign(e)$ is equal to its vertex label,
produces a $t$-signature $\csig_t$.
We can read off how many color classes of each type there are from the 
edge labeling $\maassign$.
In fact, each $t$-signature can be produced in such a way, as we prove below.
\begin{definition}[Compatible signatures]\label{def:compatible:sig}
	Let $(T, \decf)$ be a rooted branch decomposition.
	Let furthermore $t \in V(T) \setminus \leaves(T)$ with children $r$ and $s$.
	Let $\csignature_t$ be a $t$-signature,
	let $\csignature_r$ be an $r$-signature and
	$\csignature_s$ be a $s$-signature.
	We say that $(\csignature_t, \csignature_r, \csignature_s)$ is \emph{compatible} 
	if there is a triple $(\mergeaux, \malab, \maassign)$
	such that $(\mergeaux, \malab)$ is the merge skeleton of $r$ and $s$, and
	$\maassign \colon E(\mergeaux) \to \{0,1, \ldots, k\}$ is a map with the following properties.
	\begin{enumerate}
		\item\label{def:compatible:sig:children} For all $p \in \{r, s\}$ and all $\pi \in \ctypes_p$, 
			$\sum\nolimits_{e \in E(\mergeaux)\colon \pi \in e} \maassign(e) = \csignature_p(\pi)$.
		\item\label{def:compatible:sig:parent} For all $\ctype \in \ctypes_t$, 
			$\sum_{e \in E(\mergeaux)\colon \malab(e) = \ctype} \maassign(e) = \csignature_t(\ctype)$.
	\end{enumerate}
\end{definition}

We first show that we can test efficiently whether a triple of signatures is compatible.
\begin{lemma}\label{lem:bcol:runtime:compatibility}
	Let $G$ be a graph on $n$ vertices and
	let $(T, \decf)$ be one of its rooted branch decomposition of module-width $\givenmw = \modulew(T, \decf)$.
	Let $t \in V(T) \setminus \leaves(T)$ with children $r$ and $s$.
	Let $\csig_t$ be a $t$-signature, $\csig_r$ be an $r$-signature, and $\csig_s$ be an $s$-signature.
	One can decide in time $n^{2^{\calO(\givenmw)}}$ whether or not $(\csig_t, \csig_r, \csig_s)$ is compatible.
\end{lemma}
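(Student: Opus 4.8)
The plan is to observe that, once the merge skeleton is in hand, deciding compatibility of the triple $(\csig_t, \csig_r, \csig_s)$ amounts to an integer feasibility problem with $2^{\calO(\givenmw)}$ variables, each bounded by $k$, which we can settle by brute force. First I would construct the merge skeleton $(\mergeaux, \malab)$ of $r$ and $s$ explicitly. By Observation~\ref{obs:number:of:types} its vertex set $\ctypes_r \cup \ctypes_s$ has at most $4 \cdot 3^{\givenmw}$ elements, hence at most $\card{\ctypes_r}\cdot\card{\ctypes_s} \le 4 \cdot 9^{\givenmw} = 2^{\calO(\givenmw)}$ potential edges; for each pair $(\rho, \sigma)\in\ctypes_r\times\ctypes_s$ I check the three conditions of Definition~\ref{def:compatibility} against the operator $(\decaux_t, \bubblemap_r, \bubblemap_s)$ of $t$ in time $\mathrm{poly}(\givenmw)$, and whenever $\rho$ and $\sigma$ are compatible I compute the label $\malab(\rho\sigma)$, i.e.\ their merge type, in time $\calO(\modulew(T,\decf))$ by Lemma~\ref{lem:merge:types}. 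Thus the entire skeleton is built in time $2^{\calO(\givenmw)}$.

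Next, recall that by definition $(\csig_t, \csig_r, \csig_s)$ is \emph{compatible} if and only if there is a map $\maassign\colon E(\mergeaux)\to\{0,1,\ldots,k\}$ satisfying conditions~\ref{def:compatible:sig:children} and~\ref{def:compatible:sig:parent}. We may assume $k \le n$, since otherwise $G$ admits no $b$-coloring with $k$ colors and the instance is handled separately. The key point is that in any feasible $\maassign$, condition~\ref{def:compatible:sig:children} forces every $\maassign(e)$ to be at most $k \le n$, as it is bounded by the $\csig$-label of either endpoint of $e$. Hence it suffices to enumerate all maps $E(\mergeaux)\to\{0,1,\ldots,n\}$, of which there are at most $(n+1)^{\card{E(\mergeaux)}} \le (n+1)^{4\cdot 9^{\givenmw}} = n^{2^{\calO(\givenmw)}}$. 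For each candidate $\maassign$ I verify the two conditions by a single pass over $E(\mergeaux)$: accumulating $\sum_{e\ni\pi}\maassign(e)$ for every $\pi\in\ctypes_r\cup\ctypes_s$ and $\sum_{e\colon\malab(e)=\ctype}\maassign(e)$ for every $\ctype\in\ctypes_t$, and comparing these against $\csig_r,\csig_s,\csig_t$; this takes $2^{\calO(\givenmw)}\cdot\mathrm{poly}(\log n)$ time, since all integers involved are at most $n$. I answer that the triple is compatible exactly when some candidate passes, so the overall running time is $n^{2^{\calO(\givenmw)}}$.

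The only delicate points are the bound $\card{E(\mergeaux)} \le 2^{\calO(\givenmw)}$, which is precisely what keeps the brute force within $n^{2^{\calO(\givenmw)}}$, and the reduction to $k\le n$; both are straightforward. I expect these to be the main obstacle only in the bookkeeping sense. One could shave the dependence on $\givenmw$ by instead treating the feasibility of the linear system defined by conditions~\ref{def:compatible:sig:children}--\ref{def:compatible:sig:parent} as an integer program in $2^{\calO(\givenmw)}$ dimensions and appealing to Lenstra-type algorithms, but this refinement is not needed for the stated bound.
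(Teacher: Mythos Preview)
Your proposal is correct and follows essentially the same approach as the paper: build the merge skeleton in time $2^{\calO(\givenmw)}$, bound $\card{E(\mergeaux)}$ by $2^{\calO(\givenmw)}$, then brute-force all maps $\maassign\colon E(\mergeaux)\to\{0,\ldots,k\}$ and verify each against the two compatibility conditions. The only additions you make beyond the paper's argument are the explicit justification of $k\le n$ and the remark about Lenstra-type integer programming, neither of which the paper includes.
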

\begin{proof}
	We first observe that the merge skeleton can be constructed in $2^{\calO(\givenmw)}$ time,
	where $\givenmw = \modulew(T, \decf)$:
	It is easy to see that given two types $\rho \in \ctypes_r$, $\sigma \in \ctypes_s$,
	we can decide whether or not $\rho$ and $\sigma$ are compatible
	in time $\givenmw^{\calO(1)}$.
	Moreover, by Observation~\ref{obs:number:of:types}, $\card{\ctypes_r} \le 2^{\calO(\givenmw)}$
	and $\card{\ctypes_s} \le 2^{\calO(\givenmw)}$, therefore we have to check for 
	$(2^{\calO(\givenmw)})^2 = 2^{\calO(\givenmw)}$ pairs of types if they are compatible,
	and if so, compute their merge type.
	(This also implies that $\card{E(\mergeaux)} = 2^{\calO(\givenmw)}$.)
	Computing a merge type can be done in time $\givenmw^{\calO(1)}$ as well,
	simply by following the construction given in Definition~\ref{def:bcol:merge:type}.
	
	We brute-force all candidates for the labeling $\maassign$.
	Given such a candidate, we can verify in time $2^{\calO(\givenmw)}$
	if it satisfies parts~\ref{def:compatible:sig:children} and~\ref{def:compatible:sig:parent}
	of the definition of compatible signatures.
	Since $\card{E(\mergeaux)} = 2^{\calO(\givenmw)}$,
	a trivial upper bound on the number of such candidate labelings is $n^{2^{\calO(\givenmw)}}$
	and therefore the claimed bound follows.
\end{proof}

\subsection{Merging and Splitting Partial $b$-Colorings}\label{sec:bcol:merge:split}
In this section we show that the notions introduced above work as desired,
and the technical lemmas we prove here will be the cornerstone of the correctness proof of the resulting 
algorithm that we give later.

\subsubsection{Bottom to Top}

\begin{lemma}\label{lem:bcol:bt}
	Let $G$ be a graph with rooted branch decomposition $(T, \decf)$ and let 
	$t \in V(T) \setminus \leaves(T)$ be an internal node with children $r$ and $s$.
	Let $\csig_r$ be an $r$-signature, $\csig_s$ be an $s$-signature, and $\csig_t$ be a $t$-signature such that:
	\begin{itemize}
		\item For all $p \in \{r, s\}$, there is a partial $b$-coloring $(\calC_p, B_p)$ in $G_p$ that is represented by $\csig_p$, and
		\item $(\csig_t, \csig_r, \csig_s)$ is compatible.
	\end{itemize}
	Then, there is a partial $b$-coloring $(\calC_t, B_t)$ of $G_t$ that is represented by $\csig_t$.
\end{lemma}
\begin{proof}
	Let $(\mergeaux, \malab, \maassign)$ be the structure witnessing that $(\csig_t, \csig_r, \csig_s)$ is compatible.
	We use Algorithm~\ref{alg:merge:colorings} 
	to create the pair $(\calC_t, B_t)$. 
	We first show that $(\calC_t, B_t)$ is indeed a partial $b$-coloring of $G_t$,
	and then later that $\csig_t$ represents $(\calC_t, B_t)$.
	\begin{algorithm}
		\SetKwInOut{Input}{Input}
		\SetKwInOut{Output}{Output}
		\Input{$(\calC_r, B_r)$, $(\calC_s, B_s)$, $\mergeaux$, and $\maassign$ as above}
		\Output{$(\calC_t, B_t)$, where $\calC_t$ is a partition of $V_t$ and $B_t \subseteq V_t$.}
		$\calC_r' \gets \calC_r$, $\calC_s' \gets \calC_s$, $\calC_t \gets \emptyset$\;
		\ForEach{$\rho \in \ctypes_r$, $\sigma \in \ctypes_s$ with $\rho\sigma \in E(\mergeaux)$}{
			Let $x \gets \maassign(\rho\sigma)$\;
			\For{$i = 1, \ldots, x$}{
				\label{alg:merge:1} Let $C_r \in \calC_r'$ be of $r$-type $\rho$ and $C_s \in \calC_s'$ be of $s$-type $\sigma$\;
				\label{alg:merge:2} $\calC_t \gets \calC_t \cup \{C_r \cup C_s\}$\;
				\label{alg:merge:3} $\calC_r' \gets \calC_r' \setminus \{C_r\}$, $\calC_s' \gets \calC_s' \setminus \{C_s\}$\;
			}
		}
		\Return $(\calC_t, B_r \cup B_s)$\;
		\caption{Merging $(\calC_r, B_r)$ and $(\calC_s, B_s)$ according to $\mergeaux$ and $\maassign$.}
		\label{alg:merge:colorings}
	\end{algorithm}
	
	\begin{nestedclaim}\label{claim:bcol:bt:bcol}
		$(\calC_t, B_t)$ as constructed above is a partial $b$-coloring of $G_t$ with $k$ colors.
	\end{nestedclaim}
	\begin{claimproof}
	Since $\calC_r$ is a partition of $V_r$ and $\calC_s$ is a partition of $V_s$, 
	and each part of $\calC_r$ and $\calC_s$ is used precisely once to obtain a part of $\calC_t$ in Algorithm~\ref{alg:merge:colorings},
	it is clear by Definition~\ref{def:compatible:sig}(\ref{def:compatible:sig:children})
	that $\calC_t$ is a partition of $V_t$.
	Together with Definition~\ref{def:compatible:sig}(\ref{def:compatible:sig:parent})
	and the definition of a $t$-signature,
	this ensures that $\calC_t$ has $k$ parts.
	
	We argue that each part $C \in \calC_t$ is an independent set.
	Suppose for a contradiction that $C$ is not an independent set and let $uv \in E(G_t)$ be an edge with $u, v \in C$.
	By construction, there are $C_r \in \calC_r$ and $C_s \in \calC_s$ such that $C = C_r \cup C_s$.
	Moreover, since $C_r$ and $C_s$ are color classes in a coloring, they are independent sets, 
	so we may assume that $u \in C_r$ and $v \in C_s$ (up to renaming).
	For all $p \in \{r, s\}$,
	let $\ctype_p = (\cdesc_p, \cbvtx_p)$ be the $p$-type of $C_p$ in $(\calC_p, B_p)$.
	Let furthemore $Q_r \in V_r/{\sim_r}$ be the equivalence class of $\sim_r$ containing $u$
	and $Q_s \in V_s/{\sim_s}$ be the equivalence class of $\sim_s$ containing $v$.
	This means that $\cdesc_r(Q_r) = \cdesc_s(Q_s) = \ccontains$.
	For $u$ and $v$ to be adjacent, the edge $Q_rQ_s$ has to be present in $\decaux_t$.
	On the other hand, $\ctype_r \ctype_s$ is an edge of the merge skeleton
	which implies that $\ctype_r$ and $\ctype_s$ are compatible types;
	in which case Definition~\ref{def:compatibility}(\ref{def:compatibility:proper})
	forbids the presence of this edge in $\decaux_t$, a contradiction.
	
	We have shown that $\calC_t$ is a proper coloring of $G_t$,
	it remains to show that for all $C \in \calC_t$, $\card{C \cap B_t} \le 1$.
	Suppose for a contradiction that for some $C \in \calC_t$, $\card{C \cap B_t} > 1$,
	and let $C_r \in \calC_r$, $C_s \in \calC_s$ be such that $C = C_r \cup C_s$,
	as per Algorithm~\ref{alg:merge:colorings}.
	Since for all $p \in \{r, s\}$, $(\calC_p, B_p)$ is a partial $b$-coloring of $G_p$,
	we have that $\card{C_p \cap B_p} \le 1$, and clearly $C_r \cap B_s = C_s \cap B_r = \emptyset$.
	This means that $\card{C_r \cap B_r} = \card{C_s \cap B_s} = 1$;
	and in the $r$-type $(\cdesc_r, \cbvtx_r)$ of $C_r$ in $(\calC_r, B_r)$
	and the $s$-type $(\cdesc_s, \cbvtx_s)$ of $C_s$ in $(\calC_s, B_s)$,
	$\cbvtx_r = \cbvtx_s = 1$.
	But again, $(\cdesc_r, \cbvtx_r)$ and $(\cdesc_s, \cbvtx_s)$ are compatible,
	so by Definition~\ref{def:compatibility}(\ref{def:compatibility:bvtx}),
	$\cbvtx_r + \cbvtx_s \le 1$, a contradiction.
	\end{claimproof}
	
	To prove the lemma, it remains to show that the $t$-signature $\csig_t$ represents $(\calC_t, B_t)$.
	This is shown via the following claim, with Definition~\ref{def:compatible:sig}
	ensuring that the numbers work out.
	\begin{nestedclaim}
		Let $C_r \in \calC_r$ and $C_s \in \calC_s$,
		and let $\ctype_r = (\cdesc_r, \cbvtx_r)$ be the $r$-type of $C_r$ in $(\calC_r, B_r)$,
		and let $\ctype_s = (\cdesc_s, \cbvtx_s)$ be the $s$-type of $C_s$ in $(\calC_s, B_s)$,
		such that $C_t = C_r \cup C_s$ is a color class in $(\calC_t, B_t)$.
		Then, the $t$-type of $C_t$ in $(\calC_t, B_t)$ is $\mergetype(\ctype_r, \ctype_s)$.
	\end{nestedclaim}
	\begin{claimproof}
		First observe that if $C_t = C_r \cup C_s$ is a color class in $(\calC_t, B_t)$,
		then $\ctype_r$ and $\ctype_s$ are compatible by construction.
		Let $\ctype_t = (\cdesc_t, \cbvtx_t) = \mergetype(\ctype_r, \ctype_s)$.
		We have to argue that the $t$-type of $C_t$ in $(\calC_t, B_t)$ is indeed $(\cdesc_t, \cbvtx_t)$.
		
		For the first item of the definition of the merge type, we observe that $\xi_r + \xi_s = \card{C_r \cap B_r} + \card{C_s \cap B_s}$ and since $B_t = B_r \cup B_s$ and $C_t = C_r \cup C_s$, we have $\xi_t = \xi_r + \xi_s = \card{C_t \cap B_t}$.
%		We first show that $\card{C_t \cap B_t} = \cbvtx_t = \cbvtx_r + \cbvtx_s$.
%		We have that $\cbvtx_t = 1$ if and only if either $\cbvtx_r = 1$ or $\cbvtx_s = 1$
%		if and only if either $\card{C_r \cap B_r} = 1$ or $\card{C_s \cap B_s} = 1$
%		if and only if $\card{C_t \cap B_t} = 1$.
		
		Now let $Q \in V_t/{\sim_t}$.
		Suppose that $\cdesc_t(Q) = \ccontains$; we have to argue that $C_t \cap Q \neq \emptyset$ 
		and that there is no vertex $v \in (B_t \setminus C_t) \cap Q$ with $N(v) \cap C_t = \emptyset$.
		By the definition of the merge type, there is some $p \in \{r, s\}$ such that 
		there is a $Q_p \in V_p/{\sim_p}$ with $\bubblemap_p(Q_p) = Q$ and $\cdesc_p(Q_p) = \ccontains$.
		Since $C_p$ has $p$-type $(\cdesc_p, \cbvtx_p)$ in $(\calC_p, B_p)$,
		$C_p \cap Q_p \neq \emptyset$ which implies that $C_t \cap Q \neq \emptyset$.
		Now suppose that there is some vertex $v \in (B_t \setminus C_t) \cap Q$ with $N(v) \cap C_t = \emptyset$.
		This means that there is some $p \in \{r, s\}$ and some $Q_p \in \bubblemap_p^{-1}(Q)$
		such that $v \in Q_p$, and $N(v) \cap C_p = \emptyset$.
		Since $C_p$ has a $p$-type in $(\calC_p, B_p)$,
		this means that $C_p \cap Q_p = \emptyset$ and therefore $\cdesc_p(Q_p) = \cdemand$.
		Assume wlog that $p = r$.
		Since $(\cdesc_r, \cbvtx_r)$ and $(\cdesc_s, \cbvtx_s)$ are compatible,
		we have by Definition~\ref{def:compatibility}(\ref{def:compatibility:demand})
		that there is some $Q_s \in V_s/{\sim_s}$ with $\cdesc_s(Q_s) = \ccontains$ and $Q_rQ_s \in E(\decaux_t)$.
		But this implies that $v$ has a neighbor in $C_s \subseteq C_t$,
		a contradiction.
		
		Now suppose that $\cdesc_t(Q) = \cdemand$. 
		By the definition of the merge type, we have that in this case:
		\begin{enumerate}
		    \item\label{claim:dem:1}
		    For any $p \in \{r, s\}$ and $Q_p \in V_p/{\sim_p}$ with $\bubblemap_p(Q_p) = Q$, $\cdesc_p(Q_p) \neq \ccontains$.
		    \item\label{claim:dem:2} 
		    We may assume (up to renaming) that for some $Q_r \in \bubblemap_r^{-1}(Q)$, $\cdesc_r(Q_r) = \cdemand$, 
		    \item\label{claim:dem:3}
		    and that for all $Q_rQ_s \in E(\decaux_t)$, $\cdesc_s(Q_s) \neq \ccontains$.
		\end{enumerate}
		From \cref{claim:dem:1} we derive that $C_t \cap Q = \emptyset$.
		Next, \cref{claim:dem:2} implies that there is a vertex $v \in (B_r \setminus C_r) \cap Q_r$ with $N(v) \cap C_r = \emptyset$,
		and by \cref{claim:dem:3}, we can conclude that $v$ has no neighbor in $C_s$ either. Therefore, $v$ has no neighbor in $C_t$, as required.
		
		Finally, suppose that $\cdesc_t(Q) = \cnone$.
		Again then there is no $Q_p \in \bubblemap^{-1}(Q)$ such that $\cdesc_p(Q_p) = \ccontains$.
		If for all $p \in \{r, s\}$ and all $Q_p \in \bubblemap_p^{-1}(Q)$, $\cdesc_p(Q_p) = \cnone$,
		then it is clear that $C_t \cap Q = \emptyset$, and that there is no $v \in (B_t \setminus C_t) \cap Q$
		with $N(v) \cap C = \emptyset$.
		So suppose (up to renaming) that for some $Q_r \in \bubblemap_r^{-1}(Q)$, $\cdesc_r(Q_r) = \cdemand$,
		implying that there is a vertex $v \in (B_r \setminus C_r) \cap Q_r$ with $N(v) \cap C_r = \emptyset$.
		Since we did not land in case~\ref{bcol:merge:type:demand} of the definition of a merge type,
		there is some $Q_r Q_s \in E(\decaux_t)$ such that $\cdesc_s(Q_s) = \ccontains$,
		which means $v$ has a neighbor in $C_s \subseteq C_t$.
		Since this holds for any such $Q_r$ (and $Q_s$), we can conclude that there is no vertex in $(B_t \setminus C_t) \cap Q$
		with $N(v) \cap C_t = \emptyset$. 
		This concludes the proof.
	\end{claimproof}
	
	This concludes the proof of Lemma~\ref{lem:bcol:bt}.
\end{proof}

\subsubsection{Top to Bottom}
\begin{lemma}\label{lem:bcol:tb}
	Let $G$ be a graph with rooted branch decomposition $(T, \decf)$ and let 
	$t \in V(T) \setminus \leaves(T)$ be an internal node with children $r$ and $s$.
	Let $\csig_t$ be a $t$-signature, 
	and suppose there is a partial $b$-coloring $(\calC_t, B_t)$ of $G_t$
	which is represented by $\csig_t$.
	Then, there exists an $r$-signature $\csig_r$ and an $s$-signature $\csig_s$ such that
	\begin{itemize}
		\item for all $p \in \{r, s\}$ there is a partial $b$-coloring $(\calC_p, B_p)$ represented by~$\csig_p$, and
		\item $(\csig_t, \csig_r, \csig_s)$ is compatible.
	\end{itemize}
\end{lemma}
\begin{proof}
	For all $p \in \{r, s\}$, we let $\calC_p \defeq \calC_t|_{V_p}$ and $B_p \defeq B_t \cap V_p$.
	It is clear that $(\calC_p, B_p)$ is a partial $b$-coloring of $G_p$.
	\begin{nestedclaim}\label{claim:bcol:tb:representable}
		For all $p \in \{r, s\}$, $(\calC_p, B_p)$ is represented by some $p$-signature.
	\end{nestedclaim}
	\begin{claimproof}
		Suppose the claim is false for $p = r$.
		Then there is some $C_r \in \calC_r$ that has no $r$-type in $(\calC_r, B_r)$, 
		meaning that for some $Q_r \in V_r/{\sim_r}$, $Q_r \cap C_r \neq \emptyset$
		and there is a vertex $v \in (B_r \setminus C_r) \cap Q_r$ with $N(v) \cap C_r = \emptyset$.
		By construction, there is a $C_t \in \calC_t$ with $C_t = C_r \cup C_s$ for some $C_s \subseteq V_s$.
		Since $(\calC_t, B_t)$ is representable, and $\bubblemap_r(Q_r) \cap C_t \neq \emptyset$, 
		we know that $N(v) \cap C_t \neq \emptyset$ (otherwise, $C_t$ has no $t$-type in $(\calC_t, B_t)$).
		Therefore, $N(v) \cap C_s \neq \emptyset$.
		But since all vertices in $Q_r$ are twins with respect to $V_s$,
		and since $C_r \cap Q_r \neq \emptyset$ and $v \in Q_r$, 
		this means that there is an edge between some vertex in $C_r$ and some vertex in $C_s$,
		contradicting the fact that $C_t$ is an independent set.
	\end{claimproof}
	
	By the previous claim, we know that $(\calC_r, B_r)$ is represented by some $r$-signature $\csig_r$
	and that $(\calC_s, B_s)$ is represented by some $s$-signature $\csig_s$.
	It remains to show that $(\csig_t, \csig_r, \csig_s)$ is compatible.
	To be able to argue this, we show that each $t$-type with non-zero value in $\csig_t$ appears as an edge label of the merge skeleton
	$(\mergeaux, \malab)$ of $r$ and $s$,
	in particular that it is the merge type of the $r$-type and $s$-type labeling the endpoints of this edge.
	\begin{nestedclaim}
		Let $C_t \in \calC_t$ be a color class whose $t$-type in $(\calC_t, B_t)$ is $\ctype_t = (\cdesc_t, \cbvtx_t)$.
		Let $\ctype_r = (\cdesc_r, \cbvtx_r)$ be the $r$-type of $C_r \defeq C_t \cap V_r$ in $(\calC_r, B_r)$,
		and let $\ctype_s = (\cdesc_s, \cbvtx_s)$ be the $s$-type of $C_s \defeq C_t \cap V_s$ in $(\calC_s, B_s)$.
		Then, $\ctype_r$ and $\ctype_s$ are compatible and $\ctype_t = \mergetype(\ctype_r, \ctype_s)$.
	\end{nestedclaim}
	\begin{claimproof}
		We first show that $\ctype_r$ and $\ctype_s$ are compatible.
		We know that $\cbvtx_t \in \{0, 1\}$ and that $\cbvtx_t = 1$ if and only if $\card{C_t \cap B_t} = 1$
		if and only if either $\card{C_r \cap B_r} = 1$ or $\card{C_s \cap B_s} = 1$
		if and only if either $\cbvtx_r = 1$ or $\cbvtx_s = 1$,
		therefore $\cbvtx_r + \cbvtx_s \le 1$, meaning that condition~\ref{def:compatibility:bvtx} 
		of the definition of compatibility is satisfied.
		Since $C_t$ is an independent set, there are no edges between $C_r$ and $C_s$.
		This means that for any pair $Q_r \in V_r/{\sim_r}$, $Q_s \in V_s/{\sim_s}$ with 
		$\cdesc_r(Q_r) = \cdesc_s(Q_s) = \ccontains$, $Q_r Q_s \notin E(\decaux_t)$,
		otherwise there would be an edge between $C_r$ and $C_s$, 
		so condition~\ref{def:compatibility:proper} is satisfied as well.
		
		Now suppose that Definition~\ref{def:compatibility}(\ref{def:compatibility:demand}) is violated.
		We may assume (up to renaming) that there is some $Q \in V_t/{\sim_t}$ with the following properties.
		There is a $Q_r^* \in \bubblemap_r^{-1}(Q)$ with $\cdesc_r(Q_r^*) = \ccontains$,
		meaning that $C_r \cap Q_r^* \neq \emptyset$ and so $C_t \cap Q \neq \emptyset$.
		Moreover, there is some $Q_r \in \bubblemap_r^{-1}(Q)$ with $\cdesc_r(Q_r) = \cdemand$,
		where for any $Q_r Q_s \in E(\decaux_t)$, $\cdesc_s(Q_s) \neq \ccontains$.
		This means that there is a vertex $v \in (B_r \setminus C_r) \cap Q_r$ such that $N(v) \cap C_r = \emptyset$,
		and moreover that $N(v) \cap C_s = \emptyset$, implying that $N(v) \cap C_t = \emptyset$.
		Note that $v \in (B_t \setminus C_t) \cap Q_t$.
		In other words, we have argued that $Q$ is an equivalence class of $\sim_t$
		such that $C_t \cap Q \neq \emptyset$ and there is a vertex $v \in (B_t \setminus C_t) \cap Q$ such that $N(v) \cap C_t = \emptyset$.
		But this means that the color class $C_t$ cannot have a $t$-type in $(\calC_t, B_t)$, so $(\calC_t, B_t)$ was not representable, a contradiction.
		
		Now we argue that $\ctype_t$, the $t$-type of $C_t$ in $(\calC_t, B_t)$, is indeed the merge type of $\ctype_r$ and $\ctype_s$.
		We already argued above that $\cbvtx_t = \cbvtx_r + \cbvtx_s$.
		Now let $Q \in V_t/{\sim_t}$, and suppose that $\cdesc_t(Q) = \ccontains$.
		This means that $Q \cap C_t \neq \emptyset$.
		We may assume (up to renaming) that $u \in Q_r \cap C_r$ for some $Q_r \in \bubblemap_r^{-1}(Q)$.
		Since $(\calC_r, B_r)$ is representable by Claim~\ref{claim:bcol:tb:representable} 
		this already implies that $\cdesc_r(Q_r) = \ccontains$,
		therefore $\cdesc_t(Q)$ is set in accordance with the definition of the merge type.
		
		Now suppose that for some $Q \in V_t/{\sim_t}$, $\cdesc_t(Q) = \cdemand$.
		Then, $Q \cap C_t = \emptyset$ and there is some $v \in (B_t \setminus C_t) \cap Q$
		such that $N(v) \cap C_t = \emptyset$.
		First, since $Q \cap C_t = \emptyset$, this immediately implies that for all 
		$p \in \{r, s\}$ and all $Q_p \in \bubblemap_p^{-1}(Q)$, $Q_p \cap C_p = \emptyset$
		and therefore $\cdesc_p(Q_p) \neq \ccontains$.
		Now for $p \in \{r, s\}$, let $Q_p$ be the equivalence class of $\sim_p$ containing $v$.
		We may assume (up to renaming) that $p = r$.
		Clearly, $\bubblemap_r(Q_r) = Q$, therefore $Q_r \cap C_r = \emptyset$.
		Moreover, $N(v) \cap C_r = \emptyset$, and we have that $\cdesc_r(Q_r) = \cdemand$.
		Now suppose for a contradiction that for some $Q_r Q_s \in E(\decaux_t)$,
		$\cdesc_s(Q_s) = \ccontains$.
		This implies that $N(v) \cap C_s \neq \emptyset$, and therefore $N(v) \cap C_t \neq \emptyset$, a contradiction.
		We have shown that also in this case, $\cdesc_t(Q)$ is set in accordance with the definition of the merge type.
		
		Finally, suppose that $\cdesc_t(Q) = \cnone$.
		Then, $Q \cap C_t = \emptyset$ and there is no $v \in (B_t \setminus C_t) \cap Q$ with $N(v) \cap C_t = \emptyset$.
		This immediately implies that for all $p \in \{r, s\}$ and all $Q_p \in \bubblemap^{-1}(Q)$,
		$\cdesc_p(Q_p) \neq \ccontains$.
		Suppose that for some $p \in \{r, s\}$ and some $Q_p \in \bubblemap_p^{-1}(Q)$, $\cdesc_p(Q_p) = \cdemand$,
		and assume (up to renaming) that $p = r$.
		This means that there is some vertex $u \in (B_r \setminus C_r) \cap Q_r$ with $N(u) \cap C_r = \emptyset$.
		On the other hand, we know that $N(u) \cap C_t \neq \emptyset$, so $u$ has a neighbor in $C_s$.
		This means that there is a $Q_rQ_s \in E(\decaux_t)$ such that $Q_s \cap C_s \neq \emptyset$,
		meaning that $\cdesc_s(Q_s) = \ccontains$.
		Therefore, $\cdesc_t(Q)$ is also set in accordance with the definition of the merge type.
	\end{claimproof}
	
	To finish the proof, we have to construct an edge labeling $\maassign \colon E(\mergeaux) \to \{0, 1, \ldots, k\}$
	satisfying the conditions of Definition~\ref{def:compatible:sig}.
	The previous claim tells us that we can construct $\maassign$ in a straightforward way.
	Initially, set $\maassign(\ctype_t) = 0$ for all $\ctype_t \in \ctypes_t$.
	For each color class $C_t \in \calC_t$ whose $t$-type in $(\calC_t, B_t)$ is $\ctype_t$,
	we know that the $r$-type of $C_r \defeq C_t \cap V_r$, say $\ctype_r$,
	and the $s$-type of $C_s \defeq C_t \cap V_s$, say $\ctype_s$, are such
	that $\ctype_t = \malab(\ctype_r\ctype_s)$, 
	i.e.\ $\ctype_t$ appears as the label of the edge between $\ctype_r$ and $\ctype_s$ in $\mergeaux$.
	We therefore increase the value of $\maassign(\ctype_r \ctype_s)$ by one.
	Once we did this for all color classes of $(\calC_t, B_t)$,
	the tuple $(\mergeaux, \malab, \maassign)$ satisfies the requirements of Definition~\ref{def:compatible:sig},
	so $(\csig_t, \csig_r, \csig_s)$ is compatible.
\end{proof}

\subsection{The Algorithm}
As alluded to above, the algorithm is bottom-up dynamic programming along 
the given rooted branch decomposition $(T, \decf)$ of $G$.
First, we define the table entries stored at each node.
\begin{dptabledef}
For a node $t \in V(T)$ and a $t$-signature $\csignature_t$, 
we let $\dptable[t, \csignature_t] = 1$ if and only if 
there exists a partial $b$-coloring of $G_t$ that is represented by $\csignature_t$.
\end{dptabledef}

We now show that if all table entries have been computed correctly, 
then the solution can be read off the table entries stored at the root $\rootnode$ 
of the given rooted branch decomposition. 
Observe that since $V_\rootnode = V(G)$ and therefore $\overline{V_\rootnode} = \emptyset$,
the equivalence relation $\sim_\rootnode$ has one equivalence class, namely $V(G)$.
\begin{lemma}\label{lem:bcol:root}
	Let $G$ be a graph with rooted branch decomposition $(T, \decf)$ 
	and let $\rootnode \in V(T)$ be the root of $T$.
	Let $\rho$ be the $\rootnode$-type $(\cdesc_\rootnode, \cbvtx_\rootnode)$
	with $\cbvtx_\rootnode = 1$ and $\cdesc_\rootnode(V(G)) = \ccontains$.
	Let $\csig_\rootnode$ be the $\rootnode$-signature letting $\csig_\rootnode(\rho) = k$.
	Then, $G$ has a $b$-coloring with $k$ colors if and only if $\dptable[\rootnode, \csignature_\rootnode] = 1$.
\end{lemma}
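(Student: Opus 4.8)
The plan is to prove the two directions of the equivalence by unpacking the definition of the $\rootnode$-type $\rho$ and showing that a $\rootnode$-signature concentrated on $\rho$ (i.e.\ $\csig_\rootnode(\rho) = k$) corresponds exactly to a genuine $b$-coloring of $G$. The key structural fact to exploit is that, since $V_\rootnode = V(G)$, we have $\overline{V_\rootnode} = \emptyset$, so $\sim_\rootnode$ has the single equivalence class $Q = V(G)$, and moreover \emph{there is no ``future''} at the root: any demand of a color class to the future of $Q$ can never be fulfilled.

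First I would prove the forward direction. Suppose $G$ has a $b$-coloring $(\calC = (C_1,\dots,C_k), B = \{v_1,\dots,v_k\})$ with $k$ colors. Then $(\calC, B)$ is in particular a partial $b$-coloring of $G_\rootnode = G$, and I claim it is valid and every color class has $\rootnode$-type $\rho$. For $\cbvtx$: each $C_i$ contains its $b$-vertex $v_i \in B$, so $\cbvtx = 1$ by Definition (Type of a color class)\ref{def:type:bvtx}. For $\cdesc(Q)$ with $Q = V(G)$: each $C_i$ is nonempty (it contains $v_i$), so $Q \cap C_i \neq \emptyset$ and hence $\cdesc(Q) = \ccontains$ by \ref{def:type:desc:contains}. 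Thus every color class has $\rootnode$-type $\rho$, so the $\rootnode$-signature of $(\calC,B)$ is exactly $\csig_\rootnode$. Validity follows because the only equivalence class is $Q$, and $\cdesc(Q) = \ccontains$ means no color class has demand to the future of $Q$ (the ``demand'' branch \ref{def:type:desc:demand} is not triggered once \ref{def:type:desc:contains} applies); more directly, since every $v_j \in B$ has a neighbor of every other color (it is a real $b$-vertex), for each color $C_i$ and each $v_j \in B \cap Q$ we have $N_{G_\rootnode}[v_j] \cap C_i \neq \emptyset$, so the invalidity condition fails. Hence $\dptable[\rootnode,\csig_\rootnode] = 1$.

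Next I would prove the backward direction. Suppose $\dptable[\rootnode,\csig_\rootnode] = 1$, so there is a valid partial $b$-coloring $(\calC = (C_1,\dots,C_k), B)$ of $G_\rootnode = G$ whose $\rootnode$-signature is $\csig_\rootnode$, i.e.\ all $k$ color classes have $\rootnode$-type $\rho$. I claim $(\calC, B)$ is a $b$-coloring of $G$ with $k$ colors, with $B$ the set of witness $b$-vertices. Properness and the independence of each $C_i$ come for free from the definition of a partial $b$-coloring. Since each $C_i$ has $\cbvtx = 1$, there is a (unique) partial $b$-vertex $v_i \in B \cap C_i$; set $B = \{v_1,\dots,v_k\}$ (the cardinality bound $|C_i \cap B| \le 1$ together with $\cbvtx=1$ for all classes forces $|B| = k$ with one vertex per class). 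It remains to show each $v_i$ is a genuine $b$-vertex, i.e.\ $N_G(v_i) \cap C_j \neq \emptyset$ for all $j \neq i$. Fix $i \neq j$. Since $C_j$ has $\cdesc(Q) = \ccontains$ for $Q = V(G)$, we have $C_j \neq \emptyset$; suppose for contradiction $N_{G}(v_i) \cap C_j = \emptyset$. Then $v_i \in B \cap Q$ with $N_{G_\rootnode}(v_i) \cap C_j = \emptyset$, so by Definition (Type of a color class)\ref{def:type:desc:demand} the only way $\cdesc(Q) = \ccontains$ holds for $C_j$ rather than $\cdesc(Q) = \cdemand$ is that \ref{def:type:desc:contains} took precedence, which it does — but then $(\calC,B)$ would not be valid: indeed, $Q \cap C_j \neq \emptyset$ and $v_i \in B \cap Q$ satisfies $N_{G_\rootnode}[v_i] \cap C_j = \emptyset$ (as $v_i \in C_i$, not in $C_j$, and $v_i$ has no neighbor in $C_j$ by assumption), contradicting the validity of $(\calC,B)$. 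Hence $N_G(v_i) \cap C_j \neq \emptyset$, so $v_i$ is a $b$-vertex of color $i$, and $(\calC, B)$ is a $b$-coloring with $k$ colors.

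The main obstacle I expect is bookkeeping rather than conceptual: carefully tracking the interaction between the ``contains'' and ``demand'' cases of the type definition at the root and translating ``no demand to the future of $Q$'' into the genuine $b$-vertex property via the validity condition (which uses the \emph{closed} neighborhood $N_{G_\rootnode}[v]$ whereas the $b$-vertex property uses the \emph{open} neighborhood $N_G(v_i)$ — one must check the diagonal case $v_i \in C_i$ does not cause a spurious issue, which it does not since we only need neighbors in \emph{other} classes). I would make sure the uniqueness of the partial $b$-vertex in each class plus the signature constraint $\sum_{\ctype} \csig_\rootnode(\ctype) = k$ pins down $|B| = k$, so that $B$ is a legitimate witness set.
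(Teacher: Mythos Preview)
Your proposal is correct and follows essentially the same approach as the paper: unpack the type $\rho$ at the root, use that $\overline{V_\rootnode}=\emptyset$ forces a single equivalence class, and translate validity into the genuine $b$-vertex property. If anything, your forward-direction validity argument is slightly more careful than the paper's (you check all $v_j\in B$ against each $C_i$, whereas the paper phrases it in terms of the $b$-vertex of $C$ itself), and your observation about the closed versus open neighborhood is a nice sanity check the paper leaves implicit.
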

\begin{proof}
	Suppose that $G$ has a $b$-coloring $(\calC, B)$ with $k$ colors.
	Then, $(\calC, B)$ is also a partial $b$-coloring; but since all vertices in $B$ are already $b$-vertices for their color,
	all demands have been fulfilled. 
	This means that $(\calC, B)$ is representable by an $\rootnode$-signature,
	denote this $\rootnode$-signature by $\csig$.
	We argue that $\csig = \csig_\rootnode$,
	in particular that all color classes $C \in \calC$
	are of type $\rho = (\cdesc_\rootnode, \cbvtx_\rootnode)$ in $(\calC, B)$
	as in the statement of the lemma.
	Let $C \in \calC$ be any color class. 
	Since $(\calC, B)$ is a $b$-coloring, $B$ contains a $b$-vertex $v$ of $C$,
	therefore also $C \neq \emptyset$ which implies that the $\rootnode$-type of $C$ is indeed $\rho$.
	As this reasoning applies to all $k$ color classes of $(\calC, B)$,
	we can conclude that $\dptable[\rootnode, \csignature_\rootnode] = 1$.
	
	Now suppose for the other direction that $\dptable[\rootnode, \csig_\rootnode] = 1$.
	Then there is a partial $b$-coloring $(\calC, B)$ of $G_\rootnode = G$ with $k$ colors
	represented by $\csig_\rootnode$. 
	Since $(\cdesc_\rootnode, \cbvtx_\rootnode)$ is the type of each color class and $\cbvtx_\rootnode = 1$,
	each color class has a partial $b$-vertex;
	since no color class has demand to the future neighbors of $V(G)$ by $\cdesc_\rootnode$,
	each partial $b$-vertex is indeed a $b$-vertex for its color.
	Therefore, $\calC$ is a $b$-coloring of $G$ with $k$ colors.
\end{proof}

We describe how to compute the table entries, starting with the leaves of $T$.
\begin{dpleaves}
Let $t \in V(T)$ be a leaf node of $T$ and let $v \in V(G)$ be the vertex such that $\decf(v) = t$.
We show how to set the table entries $\dptable[t, \cdot]$.
The partial $b$-colorings of $G_t = (\{v\}, \emptyset)$ we have to consider are the following.
The vertex $v$ is colored with one of the $k$ colors, and it is either the partial $b$-vertex for its color or not.

The $t$-signatures representing these colorings look as follows.
Observe that $\sim_t$ has precisely one equivalence class, namely $\{v\}$.
We let $\cdesc_{\ccontains}$ be the map with $\cdesc_{\ccontains}(\{v\}) = \ccontains$.
In the case that $v$ is \emph{not} the partial $b$-vertex of its color, we have
\begin{itemize}
	\item one color of type $(\cdesc_{\ccontains}, 0)$, and
	\item $k - 1$ colors of type $(\cdesc_\emptyset, 0)$ with $\cdesc_\emptyset(\{v\}) = \cnone$.
\end{itemize}
We denote this signature by $\csignature_1$, i.e.\ we let
$\csignature_1((\cdesc_{\ccontains}, 0)) = 1$ and $\csignature_1((\cdesc_\emptyset, 0)) = k - 1$.

In the case that $v$ \emph{is} the partial $b$-vertex of its color class, 
then the remaining $k-1$ color classes have demand to the future neighbors of $\{v\}$,
so that $v$ eventually becomes the $b$-vertex of its color.
Therefore we have
\begin{itemize}
	\item one color of type $(\cdesc_{\ccontains}, 1)$, and
	\item $k - 1$ colors of type $(\cdesc_{\cdemand}, 0)$ with $\cdesc_{\cdemand}(\{v\}) = \cdemand$.
\end{itemize}
We denote this signature by $\csignature_2$, i.e.\ we let
$\csignature_2((\cdesc_{\ccontains}, 1)) = 1$ and $\csignature_2((\cdesc_{\cdemand}, 0)) = k - 1$.
To summarize, for each $t$-signature $\csignature$, we let
\begin{align*}
	\dptable[t, \csignature] \defeq \left\lbrace
		\begin{array}{ll}
			1, &\mbox{if } \csignature \in \{\csignature_1, \csignature_2\} \\
			0, &\mbox{otherwise}
		\end{array}
		\right.
\end{align*}
\end{dpleaves}

Next, the internal nodes of $T$.
\begin{dpinternal}
Now let $t \in V(T) \setminus \leaves(T)$ with children $r$ and $s$.
For each $t$-signature $\csignature_t$, we let $\dptable[t, \csignature_t] = 1$ if and only if
there exists a pair $(\csignature_r, \csignature_s)$ 
of an $r$-signature $\csignature_r$ and an $s$-signature $\csignature_s$
such that
\begin{enumerate}
	\item\label{enum:bcol:join:children} $\dptable[r, \csignature_r] = 1$ and $\dptable[s, \csignature_s] = 1$, and
	\item\label{enum:bcol:join:compatible} $(\csignature_t, \csignature_r, \csignature_s)$ is compatible. 
\end{enumerate}
\end{dpinternal}

Equipped with the lemmas of the previous sections, we can prove correctness of the above algorithm.
\begin{lemma}\label{lem:bcol:correctness}
	For each $t \in V(T)$ and $t$-signature $\csignature_t$, 
	the above algorithm computes the table entry $\dptable[t, \csignature_t]$ correctly.
\end{lemma}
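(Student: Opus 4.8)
The plan is to prove correctness by induction on the structure of the rooted branch decomposition $T$, proceeding from the leaves to the root. The statement to establish is: for every node $t$ and every $t$-signature $\csignature_t$, we have $\dptable[t, \csignature_t] = 1$ if and only if there is a valid partial $b$-coloring of $G_t$ with $t$-signature $\csignature_t$. The base case handles the leaves, and the inductive step handles internal nodes; each direction of the biconditional at an internal node is handled by one of the merging/splitting lemmas proved in Section~\ref{sec:bcol:cw:signature}.

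\textbf{Base case (leaves).} Let $t$ be a leaf with $\decf(v) = t$, so $G_t = (\{v\}, \emptyset)$. I would enumerate all valid partial $b$-colorings of $G_t$: the single vertex $v$ receives one of the $k$ colors, and either $v \in B$ or $v \notin B$. Since $\sim_t$ has the unique equivalence class $\{v\}$, the $t$-type of $v$'s own color class is forced, and the $t$-types of the remaining $k-1$ (empty) color classes are forced too --- either all $\cnone$ (if $v \notin B$) or all $\cdemand$ (if $v \in B$, since then each other color class must meet $v$ in the future, and $v$ is trivially valid as it has no neighbors). Note that with only one vertex, validity imposes no real constraint (condition~2 of an invalid color class would require a vertex in $B \cap Q$ with empty closed neighborhood intersection, but $v$'s closed neighborhood is $\{v\}$ and $v$ lies in its own color class). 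Hence the achievable $t$-signatures are exactly $\csignature_1$ and $\csignature_2$, matching the table initialization.

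\textbf{Inductive step (internal nodes).} Let $t$ have children $r$ and $s$, and assume the table entries at $r$ and $s$ are correct. For the forward direction, suppose $\dptable[t, \csignature_t] = 1$; then there are $r$- and $s$-signatures $\csignature_r, \csignature_s$ satisfying \ref{enum:bcol:join:children} and \ref{enum:bcol:join:compatible}. By the induction hypothesis there are valid partial $b$-colorings of $G_r$ and $G_s$ with these signatures, and since $(\csignature_t,\csignature_r,\csignature_s)$ is compatible, Lemma~\ref{lem:merge:sig} yields a valid partial $b$-coloring of $G_t$ with $t$-signature $\csignature_t$. For the reverse direction, suppose such a valid partial $b$-coloring $(\calC_t, B_t)$ of $G_t$ exists. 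Applying Lemma~\ref{lem:split:types} color class by color class, each color class $C_t$ splits into valid color classes $C_t \cap V_r$ and $C_t \cap V_s$ of compatible types whose merge type is the $t$-type of $C_t$; collecting these produces valid partial $b$-colorings $(\calC_r, B_t \cap V_r)$ and $(\calC_s, B_t \cap V_s)$ with some signatures $\csignature_r, \csignature_s$. One then checks that $(\csignature_t, \csignature_r, \csignature_s)$ is compatible: the assignment $\maassign$ on the merge skeleton is obtained by counting, for each edge $\rho\sigma$, how many color classes $C_t$ of $t$-type $\malab(\rho\sigma)$ were split into an $r$-part of type $\rho$ and an $s$-part of type $\sigma$; conditions~\ref{def:compatible:sig:children} and~\ref{def:compatible:sig:parent} then follow by double counting. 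By the induction hypothesis $\dptable[r, \csignature_r] = \dptable[s, \csignature_s] = 1$, so the algorithm sets $\dptable[t, \csignature_t] = 1$.

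\textbf{Main obstacle.} The delicate point in the reverse direction is the bookkeeping needed to recover a consistent edge-labeling $\maassign$ of the merge skeleton from the family of splits produced by Lemma~\ref{lem:split:types}: Lemma~\ref{lem:split:types} only guarantees the existence of \emph{some} compatible pair $(\rho, \sigma)$ for each color class (the pair may depend on $C_t$), so one must argue that aggregating these per-class choices gives a well-defined map on edges of $\mergeaux$ that simultaneously satisfies the child-side sum constraints (it reproduces $\csignature_r$ and $\csignature_s$, since every $r$-color-class is the $r$-part of exactly one $C_t$) and the parent-side sum constraint (the classes mapped to edge-label $\ctype$ are exactly the color classes of $t$-type $\ctype$, because the merge type of $\rho$ and $\sigma$ equals the $t$-type of $C_t$, as noted after Lemma~\ref{lem:split:types}). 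Everything else is a routine unwinding of the definitions and an appeal to the already-established merging and splitting lemmas.
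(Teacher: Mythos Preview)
Your proof is correct and follows the same inductive scheme as the paper: base case at the leaves, forward direction via Lemma~\ref{lem:merge:sig}, reverse direction via a splitting argument.

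The only noteworthy difference is in the reverse direction of the inductive step. The paper simply cites Lemma~\ref{lem:split:sig}; but as stated, that lemma takes a compatible triple $(\csig_t,\csig_r,\csig_s)$ as \emph{input} and produces colorings realizing $\csig_r$ and $\csig_s$, whereas what is actually needed here is to \emph{produce} such $\csig_r$, $\csig_s$ (together with a witness $\maassign$) from a given valid partial $b$-coloring of $G_t$. You instead apply Lemma~\ref{lem:split:types} class by class and then explicitly build $\maassign$ by counting how many color classes split into each pair $(\rho,\sigma)$; the verification of conditions~\ref{def:compatible:sig:children} and~\ref{def:compatible:sig:parent} via double counting is exactly the missing bookkeeping. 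In effect, you are reproving the content of Lemma~\ref{lem:split:sig} in the form actually required for the correctness lemma, so your argument is somewhat more self-contained on this point while remaining equivalent in substance.
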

\begin{proof}
	We prove the lemma by induction on the height of $t$.
	For the base case, when $t$ is a leaf, it is easily verified.
	From now on we may assume that $t \in V(T) \setminus \leaves(T)$ with children $r$ and $s$.
	
	First, suppose that the algorithm set $\dptable[t, \csignature] = 1$.
	This means that there is a pair $(\csig_r, \csig_s)$ of an $r$-signature $\csig_r$ and an $s$-signature $\csig_s$
	such that $\dptable[r, \csig_r] = 1$ and $\dptable[s, \csig_s] = 1$ and $(\csig_t, \csig_r, \csig_s)$ is compatible.
	By induction, we know that there is a partial $b$-coloring of $G_r$ represented by the $r$-signature $\csig_r$
	and a partial $b$-coloring of $G_s$ represented by the $s$-signature $\csig_s$.
	Then, by Lemma~\ref{lem:bcol:bt}, there is a partial $b$-coloring of $G_t$ represented by the $t$-signature $\csig_t$.
	
	Conversely, suppose that there is a partial $b$-coloring of $G_t$ represented by the $t$-signature $\csig_t$.
	Then, by Lemma~\ref{lem:bcol:tb}, there is a partial $b$-coloring of $G_r$ represented by an $r$-signature $\csig_r$
	and a partial $b$-coloring of $G_s$ represented by an $s$-signature $\csig_s$,
	such that $(\csig_t, \csig_r, \csig_s)$ is compatible.
	By induction, the algorithm set $\dptable[r, \csig_r] = 1$ and $\dptable[s, \csig_s] = 1$,
	and therefore, by the above description, it set $\dptable[t, \csig_t] = 1$.
\end{proof}

We wrap up. 
By Lemma~\ref{lem:bcol:correctness}, the algorithm computes all table entries correctly,
and by Lemma~\ref{lem:bcol:root}, the solution to the instance can be determined upon inspecting
the table entries associated with the root of the given branch decomposition.
Correctness of the algorithm follows.

Regarding the runtime, we observe the following.
Given an $n$-vertex graph with rooted branch decomposition $(T, \decf)$ of module-width $\givenmw = \modulew(T, \decf)$,
we have that $\card{V(T)} = \calO(n)$.
($T$ is a full binary tree on $n$ leaves, so $\card{V(T)} = 2n - 1$.)
Let $t \in V(T)$. 
If $t$ is a leaf node, then computing the table entries $\dptable[t, \cdot]$ takes constant time.
If $t$ is an internal node, then by Observation~\ref{obs:number:of:sig}, 
we have to compute $n^{2^{\calO(\givenmw)}}$ table entries.
Assume by induction that the table entries associated with the children of $t$ have been computed.
For each $t$-signature $\csig_t$
we have to try for $\left(n^{2^{\calO(\givenmw)}}\right)^2 = n^{2^{\calO(\givenmw)}}$ pairs of one signature per child
whether or not they form a compatible triple together with $\csig_t$.
For each triple, this can be done in time $n^{2^{\calO(\givenmw)}}$ by Lemma~\ref{lem:bcol:runtime:compatibility}.
Therefore, the overall runtime of the algorithm is $n^{2^{\calO(\givenmw)}}$.
\begin{theorem}\label{thm:alg:cw}
	There is an algorithm that solves \bcol in time $n^{2^{\calO(\givenmw)}}$, where
	$n$ denotes the number of vertices of the input graph, and
	$\givenmw$ denotes the module-width of a given rooted branch decomposition of the input graph.
\end{theorem}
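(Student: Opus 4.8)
The plan is to run the bottom-up dynamic programming algorithm described above along the given rooted branch decomposition $(T, \decf)$ of $G$, and then to combine the structural lemmas of this section to argue correctness and the counting bounds to argue the runtime. First I would recall that by Theorem~\ref{thm:cw:mw} it suffices to work with a rooted branch decomposition of module-width $\givenmw$, which has $\calO(n)$ nodes; such a decomposition can be computed from one of bounded clique-width in time $\calO(n^2)$, so this conversion is subsumed by the claimed bound and I may assume the decomposition is given.

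For correctness, I would invoke Lemma~\ref{lem:bcol:correctness}, which shows by induction on the height of $t$ that every table entry $\dptable[t, \csignature_t]$ equals $1$ precisely when $G_t$ admits a valid partial $b$-coloring of $t$-signature $\csignature_t$; the inductive step is exactly the content of Lemmas~\ref{lem:merge:sig} and~\ref{lem:split:sig}, which translate compatibility of signatures into merging and splitting of the underlying valid partial $b$-colorings (themselves resting on Lemmas~\ref{lem:merge:types} and~\ref{lem:split:types} at the level of single color classes). Then I would apply Lemma~\ref{lem:bcol:root}: at the root $\rootnode$ we have $\overline{V_\rootnode} = \emptyset$, so $\sim_\rootnode$ has a single equivalence class $V(G)$, and $G$ has a $b$-coloring with $k$ colors if and only if $\dptable[\rootnode, \csig_\rootnode] = 1$ for the distinguished $\rootnode$-signature that places all $k$ color classes in the type with $\cbvtx = 1$ and $\cdesc(V(G)) = \ccontains$. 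Reading off this single table entry then solves the instance.

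For the runtime, I would bound the work spent at each node. By Observation~\ref{obs:number:of:sig}, each node stores at most $n^{2^{\calO(\givenmw)}}$ table entries. A leaf node is handled in constant time. For an internal node $t$ with children $r$ and $s$, for each of the $n^{2^{\calO(\givenmw)}}$ target signatures $\csignature_t$ we iterate over all $\bigl(n^{2^{\calO(\givenmw)}}\bigr)^2 = n^{2^{\calO(\givenmw)}}$ pairs $(\csignature_r, \csignature_s)$ of child signatures with $\dptable[r, \csignature_r] = \dptable[s, \csignature_s] = 1$ and test whether $(\csignature_t, \csignature_r, \csignature_s)$ is compatible; by Lemma~\ref{lem:bcol:runtime:compatibility} each such test runs in time $n^{2^{\calO(\givenmw)}}$. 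Multiplying these bounds and summing over the $\calO(n)$ nodes of $T$ yields total running time $n^{2^{\calO(\givenmw)}}$, as claimed.

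I expect the only genuine subtlety --- already discharged by the lemmas above --- to be the bidirectional correspondence between $t$-signatures and valid partial $b$-colorings of $G_t$ realized through the merge skeleton: in particular, making sure the \emph{demand} labels propagate so that a color class is declared valid at $t$ exactly when every equivalence class it meets carries no unsatisfied demand, all without the dynamic programming table ever recording which colors each individual vertex has already seen among its neighbors. The final assembly into Theorem~\ref{thm:alg:cw} is, by contrast, routine bookkeeping over the lemmas of this section.
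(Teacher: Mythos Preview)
Your proposal is correct and follows essentially the same route as the paper: invoke Lemma~\ref{lem:bcol:correctness} (resting on Lemmas~\ref{lem:merge:sig}/\ref{lem:split:sig}) for correctness of the table entries, Lemma~\ref{lem:bcol:root} to read off the answer at the root, and then bound the runtime via Observation~\ref{obs:number:of:sig} and Lemma~\ref{lem:bcol:runtime:compatibility} over the $\calO(n)$ nodes of $T$. The remark about Theorem~\ref{thm:cw:mw} is a harmless aside, since the theorem is already phrased in terms of the module-width of a given decomposition.
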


%%%%%%% FALL COL %%%%%%%%

\subsection{Fall Coloring}\label{sec:cw:fall:col}
Recall that a \emph{fall coloring} is a special type of $b$-coloring where 
\emph{every} vertex is a $b$-vertex for its color.
In other words, it is a partition of the vertex set of a graph into independent dominating sets.
We adapt our algorithm for \bcol on graphs of bounded clique-width to solve 
\fallcol, and therefore show that the latter problem is as well solvable in time 
$n^{2^{\calO(\givenmw)}}$, where $\givenmw$ denotes the clique-width of a given 
decomposition of the input graph.

\subsection*{Adaptation of the \bcol Algorithm}
We now show how to adapt the algorithm of Theorem~\ref{thm:alg:cw} to solve the \fallcol problem
in time $n^{2^{\calO(\givenmw)}}$ as well. 
This adaptation in some sense simplifies the algorithm for \bcol, since we do no have to keep track 
of whether or not a color class has a $b$-vertex in a partial coloring;
\emph{every} vertex has to be a $b$-vertex.
Now, if we can construct a coloring such that each color class is nonempty, 
and each vertex is a $b$-vertex for its color,
then clearly we have a fall coloring.
With small modification, the mechanics of our algorithm for \bcol 
allow for checking if there is a coloring with this property.
The main difference will be in the definition of the type of a color class.

Let $(C_1, \ldots, C_k)$ be a proper coloring of $G_t$ for some node $t$,
and $C_i$ and $C_j$ be two distinct color classes.
If for some $Q \in V_t/{\sim_t}$, $C_i \cap Q = \emptyset$, and there is \emph{any} vertex $v_j \in C_j$
such that $N(v_j) \cap C_i = \emptyset$,
then $C_i$ has demand to the future neighbors of $Q$: 
the vertex $v_j$ needs to become a $b$-vertex of color $j$,
and since it has no neighbor in color class $i$ so far, one of its future neighbors
(equivalently, a future neighbor of equivalence class $Q$), has to receive color $i$.

The definition of a \emph{$t$-fall type} can be obtained from the definition of a $t$-type
by dropping the bit $\cbvtx$ which becomes unnecessary in the context of \fallcol.

The definition of a color class being of a certain $t$-fall type becomes the following.
\begin{definition}[$t$-Fall-type]
	Let $G$ be a graph with rooted branch decomposition $(T, \decf)$, and let $t \in V(T)$.
	A \emph{$t$-fall type} is a map $\cdesc \colon V_t/{\sim_t} \to \{\cnone,\ccontains,\cdemand\}$.
	
	Let $\calC = (C_1, \ldots, C_k)$ be a proper coloring of $G_t$, and let $\cdesc$ be a $t$-fall type.
	For $i \in \{1, \ldots, k\}$, we say that \emph{$C_i$ has $t$-fall type $\cdesc$ in $\calC$}
	if for each $Q \in V_t/{\sim_t}$,
	\begin{enumerate}
		\item\label{def:fall:type:desc:contains} 
		if $Q \cap C_i \neq \emptyset$ and for all $v \in Q \setminus C_i$, $N(v) \cap C_i \neq \emptyset$, then $\cdesc(Q) = \ccontains$,
		\item\label{def:fall:type:desc:demand} 
		if $Q \cap C_i = \emptyset$ and 
			there is a $v \in Q \setminus C_i$ 
			with $N_{G_t}(v) \cap C_i = \emptyset$, 
			then $\cdesc(Q) = \cdemand$, and
		\item\label{def:fall:type:desc:none} $\cdesc(Q) = \cnone$, otherwise.
	\end{enumerate}
\end{definition}

We again restrict ourselves to finding (partial) colorings that are \emph{representable},
in the sense that there is no color class that both intersects an equivalence class and has demand to its future neighbors.
In complete analogy, we define a $t$-signature as a function counting the number of color classes of each $t$-fall type.

We say that two fall-types are compatible, if they satisfy parts~\ref{def:compatibility:proper} and~\ref{def:compatibility:demand} of Definition~\ref{def:compatibility},
the definition of compatible types in the case of \bcol.
Part~\ref{def:compatibility:bvtx} simply disappears since we do not have to keep track 
of whether or not a color class contains a partial $b$-vertex.
With this in mind, the technical arguments given in Section~\ref{sec:bcol:merge:split} go through.

The definition of the table entries is analogous as well, and
by an argument parallel to the proof of Lemma~\ref{lem:bcol:root},
we can conclude that this information is sufficient to solve the problem.

We discuss the resulting algorithm.
For the leaf nodes, we only have to consider colorings with one color class whose fall-type is $\cdesc_v(\{v\}) = \ccontains$
and $k-1$ color classes whose fall-type is $\cdesc_\cdemand(\{v\}) = \cdemand$.
This is because in any fall-coloring of $G$, the vertex $v$ has to be a $b$-vertex for its color.
The computation of the internal nodes remains the same.
A correctness proof of the algorithm can now be given in the same way as in the proof of Lemma~\ref{lem:bcol:correctness},
and the discussion of the runtime of the algorithm still goes through.
We have the following theorem.
\begin{theorem}\label{thm:alg:fall:cw}
	There is an algorithm that solves \fallcol in time $n^{2^{\calO(\givenmw)}}$, where
	$n$ denotes the number of vertices of the input graph, and
	$\givenmw$ denotes the module-width of a given rooted branch decomposition of the input graph.
\end{theorem}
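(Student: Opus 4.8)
The plan is to obtain the \fallcol algorithm by mirroring, step by step, the dynamic programming framework already developed for \bcol in Sections~\ref{sec:bcol:cw:signature} and~\ref{sec:cliquewidth}, replacing the notion of a $t$-type by the notion of a fall-type and dropping the bookkeeping associated with the $\cbvtx$-bit. First I would set up the combinatorial scaffolding: the fall-type of a color class (already given), the corresponding notion of a \emph{valid} partial coloring (a color class must not simultaneously intersect an equivalence class $Q$ and have demand to the future of $Q$), and the notion of compatibility of two fall-types, which is just parts~\ref{def:compatibility:proper} and~\ref{def:compatibility:demand} of Definition~\ref{def:compatibility}. The key observation justifying this transfer is that the $\cbvtx$-component of a type never interacts with the $\cdesc$-component in any of the merging/splitting arguments: in Lemma~\ref{lem:merge:types} the bit is handled by the single line $\cbvtx_t \defeq \cbvtx_r + \cbvtx_s$ and part~\ref{def:compatibility:bvtx}, entirely separately from the construction of $\cdesc_t$, and likewise in Lemma~\ref{lem:split:types}.

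Next I would restate and reprove the analogues of Lemmas~\ref{lem:merge:types} and~\ref{lem:split:types} for fall-types. The proof of the merge lemma carries over verbatim after deleting every sentence that mentions $\cbvtx$, partial $b$-vertices, or the bit component: the construction of $\cdesc_t$ from $\cdesc_r,\cdesc_s$ is unchanged, the independence argument uses only part~\ref{def:compatibility:proper}, and the validity argument at the end uses only part~\ref{def:compatibility:demand}. There is one genuine point to check, namely that the fall-type \emph{demand} semantics (``there is \emph{any} vertex $v_j \in C_j \cap Q$ with $N(v_j) \cap C_i = \emptyset$'', quantifying over all color classes $C_j$ rather than only over partial $b$-vertices) is still compatible with the $\cdesc_t$-construction used in Lemma~\ref{lem:merge:types}; but since that construction only ever reasons about whether a demand at $Q_p$ is cancelled by an edge in $\decaux_t$ to a \ccontains-class, and demand-cancellation is downward-closed in exactly the same way for both notions, no change is needed. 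The split lemma transfers identically.

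Then I would transfer the signature layer of Section~\ref{sec:bcol:cw:signature} — fall-signature, merge skeleton, compatibility of a triple of fall-signatures, and Lemmas~\ref{lem:bcol:runtime:compatibility}, \ref{lem:merge:sig}, \ref{lem:split:sig} — with no significant modification, since these lemmas are purely combinatorial consequences of the merge/split lemmas for types plus the counting structure of Algorithms~\ref{alg:merge:colorings} and~\ref{alg:split:colorings}. The number of fall-types at a node $t$ is $3^{\card{V_t/{\sim_t}}} \le 3^{\givenmw}$ (we lost the factor $2$), so the bounds of Observations~\ref{obs:number:of:types} and~\ref{obs:number:of:sig} still hold and the runtime analysis is unaffected. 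For the table entries, $\dptable[t,\csignature_t] = 1$ iff $G_t$ admits a \emph{valid partial coloring} with fall-signature $\csignature_t$; the internal-node recurrence is literally the same as the \bcol one; at a leaf node $t$ with $\decf(v)=t$ the only admissible signature assigns one color class the fall-type $\cdesc(\{v\})=\ccontains$ and $k-1$ color classes the fall-type $\cdesc(\{v\})=\cdemand$ (since in any fall coloring $v$ must be a $b$-vertex for its color, all other colors must appear in $v$'s future neighborhood). Finally I would state the root lemma: with $\csig_\rootnode$ assigning $k$ to the unique fall-type $\rho$ having $\cdesc_\rootnode(V(G))=\ccontains$, we have $\dptable[\rootnode,\csig_\rootnode]=1$ iff $G$ has a fall coloring with $k$ colors — the forward direction because every color class of a fall coloring is nonempty and every vertex (hence in particular at the root, where $\overline{V_\rootnode}=\emptyset$, so ``demand to the future'' is vacuous) sees all other colors, and the backward direction because a valid partial coloring of $G_\rootnode=G$ with all classes of type $\ccontains$ and no outstanding demand is exactly a partition into independent dominating sets. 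Correctness then follows by the induction of Lemma~\ref{lem:bcol:correctness}, applied mutatis mutandis, and the runtime bound $n^{2^{\calO(\givenmw)}}$ follows exactly as before.

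I do not expect any step to be a serious obstacle: the whole proof is a transcription of the \bcol development with the $\cbvtx$-machinery excised. The one place demanding a moment of care is verifying that the fall-type's \emph{demand} condition — which looks at arbitrary vertices of other color classes rather than only at designated partial $b$-vertices — does not break the validity/merge arguments; the resolution is that at every node the algorithm maintains validity as an invariant, so a color class never has an outstanding demand to a $Q$ it intersects, and the demand-propagation in the $\cdesc_t$ construction is monotone in exactly the way needed. A second, very minor, point is the leaf base case and the root lemma, where one must invoke that $v$ being a $b$-vertex is \emph{forced} for every vertex (not merely witnessed by one vertex per color), which is precisely the defining difference between fall coloring and $b$-coloring and is what makes the leaf signature unique.
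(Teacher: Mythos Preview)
Your proposal is correct and follows essentially the same approach as the paper: drop the $\cbvtx$-bit from the type, keep only parts~\ref{def:compatibility:proper} and~\ref{def:compatibility:demand} of compatibility, observe that the merge/split lemmas and the entire signature layer go through after excising the $\cbvtx$-machinery, set the leaf base case to the unique signature with one $\ccontains$ and $k-1$ $\cdemand$ classes, and read off the answer at the root exactly as you describe. If anything, your write-up is slightly more explicit than the paper's own sketch, particularly in flagging the change in demand semantics (quantifying over all vertices rather than designated partial $b$-vertices) and in spelling out the root lemma.
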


\subsection*{Hardness}
We now show that the runtime of the algorithm from Theorem~\ref{thm:alg:fall:cw} is optimal in some sense.
Specifically, we give a reduction that proves the same lower bounds as the ones we obtained for \bcol.
Recall again that linear module-width and linear clique-width can be used interchangeably in this setting
(Theorem~\ref{thm:cw:mw}).
\begin{proposition}
	The \fallcol problem on graphs on $n$ vertices 
	parameterized by the module-width $\givenmw$ of the input graph is \Wone-hard and
	cannot be solved in time $n^{2^{o(\givenmw)}}$, unless \ETH fails.
	Moreover, the hardness holds even when a linear branch decomposition of width $\givenmw$ is provided.
\end{proposition}
\begin{proof}
	We give a reduction from \textsc{Graph Coloring} parameterized by the module-width $w$ of the input graph 
	which is \Wone-hard and has no $n^{2^{o(w)}}$-time algorithm under \ETH~\cite{FominEtAl2010,Fomin2018}.
	Given an instance $(G, k)$
	construct a instance $(H, k)$ of \textsc{Fall Coloring} as follows.
	We obtain $H$ from $G$ by adding, for each vertex $v \in V(G)$, a clique $X_v$ on $k-1$ vertices to the graph,
	and making $X_v$ complete to $v$.

	If $H$ has a fall coloring with $k$ colors, then clearly this is a proper coloring of $G$ with $k$ colors,
	since $G$ is an induced subgraph of $H$.
	Suppose $G$ has a proper coloring with $k$ colors.
	For each vertex $v \in V(G)$, we can bijectively assign the $k-1$ remaining colors 
	(i.e.\ all colors except the one appearing on $v$)
	to the vertices of $X_v$.
	The coloring constructed this way is a fall coloring of $H$ with $k$ colors:
	First, we immediately observe that the coloring is proper. 
	Since we started from a proper coloring of $G$, there is no monochromatic edge in $G$.
	Since we colored the vertices of each $X_v$ bijectively with all colors except the one appearing on $v$,
	and since $N_H(X_v) \cap V(G) = \{v\}$, we did not introduce any monochromatic edge either.
	It remains to argue that each vertex of $H$ is a $b$-vertex for its color.
	For each $v \in V(G)$, we have that $v$ is a $b$-vertex since the remaining $k-1$ colors appear on $X_v$.
	For each $u \in X_v$, we have that $u$ is a $b$-vertex since it sees $k-2$ colors on $X_v \setminus \{u\}$, 
	plus the color of $v$; since $X_v \cup \{v\}$ is a clique, all of these colors are mutually distinct.

	The size of $H$ is polynomial in the size of $G$, 
	and it is clear that adding the cliques $X_v$ did not increase the module-width of $G$.
\end{proof}

%%%%%%%% CONCLUSION %%%%%%%%

\section{Conclusion}
In this work, we gave an \XP-algorithm for \bcol parameterized by the clique-width 
of a given decomposition of the input graph, and an \FPT-algorithm parameterized
by the vertex cover number.
This initiated the study of structural parameterizations of the \bcol and \bchrom
problems.
The most prominent parameter sitting between clique-width and the vertex cover number is
arguably the treewidth of a graph.
Since any graph of bounded treewidth has bounded clique-width,
our algorithm implies that \bcol parameterized by treewidth is in \XP.
We therefore ask, is \bcol parameterized by the treewidth of the input graph \FPT or \W[1]-hard?

It would be interesting to obtain an \FPT-algorithm for \bcol parameterized by the vertex 
cover number $\vertexcover$ whose runtime is tight under \ETH.
Lokshtanov et al.~\cite{LMS2018b} showed that \gcol has no 
$2^{o(\vertexcover \log \vertexcover)}\cdot n^{\calO(1)}$ time algorithm unless \ETH fails,
and by the same argument\footnote{Noting that we may assume that 
the number of colors is always linearly bounded in the vertex cover number;
so adding the clique does not increase the number of colors in a prohibitive way.} 
given in Proposition~\ref{prop:bcol:lower:bound}, this rules out 
$2^{o(\vertexcover \log \vertexcover)}\cdot n^{\calO(1)}$ time algorithms for \bcol under \ETH.
We therefore ask if the runtime of
$2^{\calO(\vertexcover^2)}\cdot n^{\calO(1)}$ in
Corollary~\ref{coro:vc} can be improved to
$2^{\calO(\vertexcover \log \vertexcover)}\cdot n^{\calO(1)}$.

There are two main approaches for solving \textsc{Graph Coloring} parameterized by clique-width,
one being efficient when the number of colors is small~\cite{Lampis2020},
and the other being efficient when the number of colors is large~\cite{EspelageEtAl2001,Wan94}.
Our algorithm for \textsc{$b$-Coloring} falls in the latter category.
It would be interesting to obtain an efficient algorithm 
for \bcol parameterized by clique-width when the number is small,
with a running time that is tight under the Strong Exponential Time Hypothesis
as it was done for \textsc{Graph Coloring} by Lampis~\cite{Lampis2020}.
Moreover, Courcelle et al.~\cite{CourcelleEtAl2020} recently gave an algorithm that unifies both 
approaches into a single algorithm; is the same possible for \bcol?

\paragraph*{Acknowledgements.}
    We would like to thank the anonymous reviewers for many valuable suggestions that improved this work.
    We are particularly grateful for the suggestion to replace our initial vertex cover based algorithm and the use of Courcelle's Theorem by a single explicit DP algorithm on graphs of bounded tree-width. This led to a cleaner and more precise presentation of the result on chordal graphs as well as an improved algorithm parameterized by vertex cover.

\bibliographystyle{plain}
\bibliography{99-ref}   % name your BibTeX data base

\end{document}